\theoremstyle{plain}
\newtheorem{theorem}{Theorem}
\newtheorem{definition}[theorem]{Definition}
\newtheorem{lemma}[theorem]{Lemma}
\newtheorem{corollary}[theorem]{Corollary}
\newtheorem{fact}[theorem]{Fact}
\newtheorem{example}[theorem]{Example}
\newcommand{\LCE}{\mathsf{LCE}}
\newcommand{\freq}{\mathsf{Freq}}
\newcommand{\extract}{\mathsf{Extract}}
\newcommand{\emptystr}{\varepsilon} 
\newcommand{\slp}{\mathcal{S}} 
\newcommand{\vars}{\mathcal{V}}
\newcommand{\rules}{\mathcal{D}}
\newcommand{\hrules}[1]{\mathcal{D}_{#1}}
\newcommand{\dtree}{\mathcal{T}} 
\newcommand{\gtext}[1]{\mathit{T}_{#1}} 
\newcommand{\letters}[1]{\Sigma_{#1}} 
\newcommand{\nletters}[1]{\hat{\letters{}}_{#1}}
\newcommand{\bletters}[1]{\bchar{\letters{}}_{#1}}
\newcommand{\lletters}[1]{\lchar{\letters{}}_{#1}}
\newcommand{\rletters}[1]{\rchar{\letters{}}_{#1}}
\newcommand{\lchar}[1]{\acute{#1}}
\newcommand{\rchar}[1]{\grave{#1}}
\newcommand{\bchar}[1]{\ddot{#1}}
\newcommand{\lml}[1]{\mathsf{LML}({#1})} 
\newcommand{\rml}[1]{\mathsf{RML}({#1})} 
\newcommand{\pol}{\mathsf{PopOutLet}} 
\newcommand{\pil}{\mathsf{PopInLet}} 
\newcommand{\vocc}[1]{\mathsf{VOcc}({#1})} 
\newcommand{\PSeq}{\mathit{PSeq}}
\newcommand{\rg}[1]{\mathcal{S}_{#1}} 
\newcommand{\val}[1]{\mathit{val}_{#1}} 
\newcommand{\bcomp}{\mathsf{BComp}}
\newcommand{\pcomp}{\mathsf{PComp}}
\newcommand{\recomp}{\mathsf{TtoG}}
\newcommand{\simrecomp}{\mathsf{SimTtoG}}
\newcommand{\gtog}{\mathsf{GtoG}}
\title{
  Longest Common Extensions with Recompression
}
\author{
  Tomohiro~I\\
  {Kyushu Institute of Technology, Japan}\\
  {\texttt{tomohiro@ai.kyutech.ac.jp}}\\
}
\date{}
\begin{document}
\maketitle

\begin{abstract}
  Given two positions $i$ and $j$ in a string $\gtext{}$ of length $N$, 
  a \emph{longest common extension (LCE) query} asks for the length of the longest common prefix
  between suffixes beginning at $i$ and $j$.
  A compressed LCE data structure is a data structure that stores $\gtext{}$ in a compressed form while supporting fast LCE queries.
  In this article we show that the \emph{recompression} technique is a powerful tool for compressed LCE data structures.
  We present a new compressed LCE data structure of size $O(z \lg (N/z))$ that supports LCE queries in $O(\lg N)$ time, 
  where $z$ is the size of Lempel-Ziv 77 factorization without self-reference of $\gtext{}$.
  Given $\gtext{}$ as an uncompressed form,
  we show how to build our data structure in $O(N)$ time and space.
  Given $\gtext{}$ as a grammar compressed form, i.e., an straight-line program of size $n$ generating $\gtext{}$,
  we show how to build our data structure in $O(n \lg (N/n))$ time and $O(n + z \lg (N/z))$ space.
  Our algorithms are deterministic and always return correct answers.
\end{abstract}

\section{Introduction}
Given two positions $i$ and $j$ in a text $\gtext{}$ of length $N$, 
a \emph{longest common extension (LCE) query} $\LCE(i,j)$ asks for the length of the longest common prefix
between suffixes beginning at $i$ and $j$.
Since LCE queries play a central role in many string processing algorithms~(see text book~\cite{Gusfield97} for example),
efficient LCE data structures have been extensively studied.
If we are allowed to use $O(N)$ space, optimal $O(1)$ query time can be achieved by, e.g., 
lowest common ancestor queries~\cite{Bender2005Lca} on the suffix tree of $\gtext{}$.
However, $O(N)$ space can be too expensive nowadays as the size of strings to be processed becomes quite large.
Thus, recent studies focus on more space efficient solutions.

Roughly there are three scenarios:
Several authors have studied tradeoffs among query time, 
construction time and data structure size~\cite{PT08,bille14:_time,bille15:_longes_common_exten_sublin_space,Tanimura2016LCE_EfficitnConstruction};
In~\cite{Prezza2016InplaceLCE}, Prezza presented in-place LCE data structures 
showing that the memory space for storing $\gtext{}$ can be replaced with an LCE data structure while retaining optimal substring extraction time;
LCE data structures working on grammar compressed representation of $\gtext{}$ were studied in~\cite{I2015Drg,bille13:_finger_compr_strin,BilleCCG15,Nishimoto2016DynamicLCE_CompressedSpace}.

In this article we pursue the third scenario, which is advantageous when $\gtext{}$ is highly compressible.
In grammar compression, $\gtext{}$ is represented by a Context Free Grammar (CFG) that generates $\gtext{}$ and only $\gtext{}$.
In particular CFGs in Chomsky normal form, called Straight Line Programs (SLPs), are often considered
as any CFG can be easily transformed into an SLP without changing the order of grammar size.
Let $\slp$ be an arbitrary SLP of size $n$ generating $\gtext{}$.
Bille et al.~\cite{BilleCCG15} showed a Monte Carlo randomized data structure of $O(n)$ space
that supports LCE queries in $O(\lg N + \lg^2 \ell)$ time, where $\ell$ is the answer to the LCE query.
Because their algorithm is based on Karp-Rabin fingerprints, the answer is correct w.h.p (with high probability).
If we always expect correct answers, we have to verify fingerprints in preprocessing phase, 
spending either $O(N \lg N)$ time (w.h.p.) and $O(N)$ space or $O(\frac{N^2}{n} \lg N)$ time (w.h.p.) and $O(n)$ space.

For a deterministic solution, 
I et al.~\cite{I2015Drg} proposed an $O(n^2)$-space data structure, 
which can be built in $O(n^2h)$ time and $O(n^2)$ space from $\slp$, and
supports LCE queries in $O(h \lg N)$ time, where $h$ is the height of $\slp$.
As will be stated in Theorem~\ref{theo:lce_from_slp}, we outstrip this result.

Our work is most similar to that presented in~\cite{Nishimoto2016DynamicLCE_CompressedSpace}.
They showed that the signature encoding~\cite{DBLP:journals/algorithmica/MehlhornSU97} of $\gtext{}$, 
a special kind of CFGs that can be stored in $O(z \lg N \lg^* N)$ space,
can support LCE queries in $O(\lg N + \lg \ell \lg^* N)$ time,
where $z$ is the size of LZ77 factorization\footnote{Note that there are several variants of LZ77 factorization. In this article we refer to the one that is known as the \emph{f-factorization without self-reference} as LZ77 factorization unless otherwise noted.} of $\gtext{}$ and $\lg^*$ is the iterated logarithm.
The signature encoding is based on localy consistent parsing technique, 
which determines the parsing of a string by local surrounding.
A key property of the signature encoding is that 
any occurrence of the same substring of length $\ell$ in $\gtext{}$ is guaranteed to be compressed in almost same way 
leaving only $O(\lg \ell \lg^* N)$ discrepancies in its surrounding.
As a result, an LCE query can be answered by tracing the $O(\lg \ell \lg^* N)$ surroundings
created over two occurrences of the longest common extension.
The algorithm is quite simple as we simply simulate the traversal of the derivation tree on the CFG
while matching substrings by appearances of the common variables, which takes $O(\lg N + \lg \ell \lg^* N)$ time.
Note that the cost $O(\lg N)$ is needed anyway to traverse the derivation tree of height $O(\lg N)$ from the root.

In this article we show that CFGs created by the \emph{recompression} technique exhibit a similar property
that can be used to answer LCE queries in $O(\lg N)$ time.
In recent years recompression has been proved to be a powerful tool in problems related to grammar compression~\cite{Jez2015Aog,Jez2015FFC,Jez2014Aos}
and word equations~\cite{Jez2016OneVariableWordEquation_LinearTime,Jez2016Recompression_WordEquations}.
The main component of recompression is to replace some pairs in a string with variables of the CFG.
Although we use global information (like the frequencies of pairs in the string) to determine which pairs to be replaced, 
the pairing itself is done very locally, i.e., ``all'' occurrences of the pairs are replaced.
Then we can show that any occurrence of the same substring in $\gtext{}$ is guaranteed to be compressed in almost same way
leaving only $O(\lg N)$ discrepancies in its surrounding.
This leads to an $O(\lg N)$-time algorithm to answer LCE queries, 
improving the $O(\lg N + \lg \ell \lg^* N)$-time algorithm of~\cite{Nishimoto2016DynamicLCE_CompressedSpace}.
We also improve the data structure size from $O(z \lg N \lg^* N)$ of~\cite{Nishimoto2016DynamicLCE_CompressedSpace}\footnote{We believe that 
the space complexities of~\cite{Nishimoto2016DynamicLCE_CompressedSpace} can be improved to $O(z \lg (N/z) \lg^* N)$ 
by using the same trick we use in Lemma~\ref{lem:recomp_size}.} to $O(z \lg (N/z))$.

In~\cite{Nishimoto2016DynamicLCE_CompressedSpace}, the authors proposed efficient algorithms to 
build their LCE data structure from various kinds of input as summarized in Table~\ref{table:resultcomparison}.
We achieve a better and cleaner complexity to build our LCE data structure from SLPs.
This has a great impact on compressed string processing, 
in which we are to solve problems on SLPs without decompressing the string explicitly.
For instance, we can apply our result to the problems discussed in Section~7 of~\cite{Nishimoto2016DynamicLCE_CompressedSpace} 
and immediately improve the results (other than Theorem~17).
It should be noted that the data structures in~\cite{Nishimoto2016DynamicLCE_CompressedSpace} also support efficient text edit operations.
We are not sure if our data structures can be efficiently dynamized.

\begin{table}
  \begin{center}
    \label{table:resultcomparison}
    \begin{tabular}{|c|c|c|l|}\hline
      Input      & Construction time              & Construction space         & Reference \\\hline
      $\gtext{}$ & $N f_{\mathcal{A}}$               & $z \lg N \lg^* N$          & Theorem~3 (1a) of~\cite{Nishimoto2016DynamicLCE_CompressedSpace} \\\hline
      $\gtext{}$ & $N$                            & $N$                        & Theorem~3 (1b) of~\cite{Nishimoto2016DynamicLCE_CompressedSpace} \\\hline
      $\slp$     & $n f_{\mathcal{A}} \lg N \lg^* N$ & $n + z \lg N \lg^* N$        & Theorem~3 (3a) of~\cite{Nishimoto2016DynamicLCE_CompressedSpace} \\\hline
      $\slp$     & $n \lg \lg n \lg N \lg^* N$   & $n \lg^* N + z \lg N \lg^* N$ & Theorem~3 (3b) of~\cite{Nishimoto2016DynamicLCE_CompressedSpace} \\\hline
      LZ77       & $z f_{\mathcal{A}} \lg N \lg^* N$ & $z \lg N \lg^* N$            & Theorem~3 (2) of~\cite{Nishimoto2016DynamicLCE_CompressedSpace} \\\hline
      $\gtext{}$ & $N$                           & $N$                         & this work, Theorem~\ref{theo:lce_from_uncompressed} \\\hline
      $\slp$     & $n\lg(N/n)$                   & $n + z \lg (N/z)$           & this work, Theorem~\ref{theo:lce_from_slp} \\\hline
      LZ77       & $z\lg^2(N/z)$                 & $z \lg (N/z)$               & this work, Corollary~\ref{coro:lce_from_lz} \\\hline
    \end{tabular}
    \caption{Comparison of construction time and space between ours and~\cite{Nishimoto2016DynamicLCE_CompressedSpace}, 
             where $N$ is the length of $\gtext{}$, $\slp$ is an SLP of size $n$ generating $\gtext{}$,
             $z$ is the size of LZ77 factorization of $\gtext{}$, and $f_{\mathcal{A}}$ is the time needed for predecessor queries 
             on a set of $z \lg N \lg^* N$ integers from an $N$-element universe.}
  \end{center}
\end{table}

Theorems~\ref{theo:lce_from_uncompressed} and~\ref{theo:lce_from_slp} show our main results.
Note that our data structure is a simple CFG of height $O(\lg N)$
on which we can simulate the traversal of the derivation tree in constant time per move.
Thus, it naturally supports $\extract(i, \ell)$ queries, 
which asks for retrieving the substring $\gtext{}[i..i+\ell-1]$, in $O(\lg N + \ell)$ time.
\begin{theorem}\label{theo:lce_from_uncompressed}
  Given a string $\gtext{}$ of length $N$, we can compute in $O(N)$ time and space a compressed representation of $\gtext{}$ of size $O(z \lg (N/z))$
  that supports $\extract(i, \ell)$ in $O(\lg N + \ell)$ time and $\LCE$ queries in $O(\lg N)$ time.
\end{theorem}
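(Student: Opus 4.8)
The plan is to run the recompression algorithm $\recomp$ on the uncompressed input $\gtext{}$, producing a run-length SLP (RLSLP) $\rlslp$ of height $O(\lg N)$ that generates $\gtext{}$, and then to argue (i) that $\rlslp$ has size $O(z \lg (N/z))$, (ii) that it can be computed in $O(N)$ time and space, and (iii) that it supports $\extract$ and $\LCE$ queries in the claimed bounds. Recompression alternates two kinds of rounds: a block-compression round $\bcomp$ that replaces every maximal run $a^k$ ($k \ge 2$) of each letter $a$ by a single fresh variable (a run-length rule), and a pair-compression round $\pcomp$ that, using a 2-coloring of the current alphabet into ``left'' and ``right'' letters, replaces every occurrence of a left-letter followed by a right-letter by a fresh variable. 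Each round at least halves the length of the working string, so $O(\lg N)$ rounds suffice and the derivation tree has height $O(\lg N)$.

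For the size bound I would invoke the key structural property of recompression: each round shrinks the string by a constant factor, and — crucially — the number of \emph{distinct} pairs/blocks that actually get compressed across all rounds is governed by the LZ77 size. Concretely, I expect an argument of the following shape. Fix the LZ77 factorization of $\gtext{}$ into $z$ phrases. In any round, a newly created variable that does \emph{not} straddle a phrase boundary has an occurrence entirely inside an earlier-copied region, hence it is ``the same'' as a variable created for the source of that copy; only variables whose occurrence crosses one of the $z$ phrase boundaries can be genuinely new. This caps the number of new variables in one round by $O(z)$ — but a cruder bound of ``length of the string in that round'' is also available, which is better in the first $O(\lg z)$ rounds. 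Summing the pointwise minimum of these two bounds over the $O(\lg N)$ rounds gives $\sum_{r} \min(N/2^{r}, O(z)) = O(z \lg(N/z))$, which is exactly Lemma~\ref{lem:recomp_size} (the ``trick'' alluded to in the footnote). I would cite that lemma rather than reprove it here.

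For the construction time and space: a single $\bcomp$ or $\pcomp$ round over a string of length $m$ can be implemented in $O(m)$ time and space by a linear scan (computing maximal runs, or scanning adjacent pairs after the coloring is fixed), provided the coloring for $\pcomp$ can be chosen in $O(m)$ time so that it ``catches'' a constant fraction of the adjacent pairs; the standard deterministic construction (e.g. a greedy/sorting-based $2$-coloring, or the alphabet-reduction approach of Je\.z) achieves this. Since the lengths $m$ across rounds form a geometric series $N, N/2, N/4, \ldots$, the total time and space over all rounds is $O(N)$; the final grammar fits in $O(z\lg(N/z)) = O(N)$ space as well. The resulting $\rlslp$ has constant out-degree rules except for run-length rules $A \to a^k$, which we store as $(a,k)$; standard balanced-search-tree or jump-pointer augmentation lets us move one step in the derivation tree (descending into a run-length rule by any amount) in $O(1)$ time, and storing subtree sizes gives $\extract(i,\ell)$ in $O(\lg N + \ell)$ time by navigating from the root to position $i$ and then reading off $\ell$ leaves.

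Finally, for $\LCE(i,j)$ I would use the ``almost-same compression'' property: run the derivation-tree traversal from the root down to the leaves at positions $i$ and $j$ simultaneously, maintaining for each the sequence of $O(\lg N)$ ancestors together with the offsets. Because two occurrences of a common substring of length $\ell$ are compressed identically except for $O(\lg N)$ discrepancies concentrated near the two endpoints, the matched prefix is recovered by comparing $O(\lg N)$ pairs of (variable, offset) entries along these two root-to-leaf spines, charging $O(1)$ per comparison — one either confirms a whole subtree matches (by variable-name equality plus offset alignment) or descends into it; the number of descents is bounded by the height plus the number of discrepancies, i.e. $O(\lg N)$. The main obstacle, and the part deserving the most care, is making this last argument rigorous: one must show precisely that the recompression parsing of $\gtext{}[i..]$ and $\gtext{}[j..]$ agree up to length $\ell$ except in $O(\lg N)$ places — this is the analogue, for recompression, of the local-consistency property of signature encodings, and it is where the ``all occurrences of a pair are replaced'' globality of recompression is used. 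I expect this to be proved by induction on rounds, tracking how the boundary between ``already-matched'' and ``not-yet-known'' parts of the two occurrences can shift by at most $O(1)$ per round.
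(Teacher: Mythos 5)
Your overall route is the paper's route: run $\recomp$ on $\gtext{}$ to get an RLSLP of height $O(\lg N)$ in $O(N)$ time (Lemma~\ref{lem:compute_recomp}), invoke the $O(z\lg(N/z))$ size bound (Lemma~\ref{lem:recomp_size}), support $\extract$ through the grammar (Lemma~\ref{lem:extract}), and answer $\LCE$ by exploiting that two occurrences of the common extension are parsed identically up to $O(\lg N)$ discrepancies -- the property the paper formalizes as popped sequences (Lemma~\ref{lem:popped_seq}) and uses in Lemma~\ref{lem:LCE_query} by matching $\PSeq(w)$ along the $O(\lg N)$ ancestors of the two chains (Lemma~\ref{lem:num_ancestors}, Corollary~\ref{cor:num_ancestors_pseq}). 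Since you cite Lemma~\ref{lem:recomp_size} rather than rely on your sketch of it, the theorem-level argument goes through as in the paper.

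Two points in your supporting sketches would not survive scrutiny, though. First, the claim that ``only variables whose occurrence crosses one of the $z$ phrase boundaries can be genuinely new'' is false: a letter created inside a copied phrase need not exist at the source, because a letter near the copy's end may have been consumed at an earlier level by its (different) context at the source, so a pair or block that is present and gets replaced strictly inside the copy simply does not occur at the source occurrence. The correct locality statement is weaker and is exactly what the paper proves: the letters of $\PSeq(f_i)$ already exist from the earlier occurrence, and the genuinely new letters are confined to the $O(\lg N)$ ancestors of the chain labeled $\PSeq(f_i)$ -- these ancestors can lie strictly inside the phrase. The paper then counts per phrase (improved to $O(\lg(N/z))$ per phrase by cutting off at level $\Theta(\lg(N/z))$, where the remaining string has length at most $z$), rather than your per-round $\min(N/2^r,O(z))$ summation, although both give $O(z\lg(N/z))$ once the locality statement is in place. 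Second, a minor slip: a round does not halve the string; $\pcomp$ with the greedy partition guarantees only a $3/4$ factor (and $\bcomp$ alone may not shrink the string at all), which still yields $O(\lg N)$ levels, geometric decay, and the $O(N)$ total. Finally, the locality fact you flag as ``the part deserving the most care'' for the LCE query is precisely Lemma~\ref{lem:popped_seq}; your proposed induction on rounds with $O(1)$ boundary drift per level is the intended argument, so on this point you and the paper are at the same level of rigor.
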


\begin{theorem}\label{theo:lce_from_slp}
  Given an SLP of size $n$ generating a string $\gtext{}$ of length $N$, 
  we can compute in $O(n \lg (N/n))$ time and $O(n + z \lg (N/z))$ space a compressed representation of $\gtext{}$ of size $O(z \lg (N/z))$
  that supports $\extract(i, \ell)$ in $O(\lg N + \ell)$ time and $\LCE$ queries in $O(\lg N)$ time.
\end{theorem}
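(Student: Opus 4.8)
The plan is to simulate the recompression procedure $\recomp$ underlying Theorem~\ref{theo:lce_from_uncompressed} directly on the input SLP $\slp$, never materializing $\gtext{}$. Recall that $\recomp$ processes $\gtext{}$ in $O(\lg N)$ rounds; each round first performs block compression $\bcomp$ (replace every maximal run $a^k$, $k\ge 2$, of a symbol by a fresh run-length variable) and then pair compression $\pcomp$ (deterministically split the current alphabet into $\lletters{}$ and $\rletters{}$ and replace every occurrence of a pair $ab$ with $a\in\lletters{}$, $b\in\rletters{}$ by a fresh variable), and the new variables introduced in round $i$ become level $i$ of the output RLSLP $\rlslp$. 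All global choices depend only on the multiset of symbols and of adjacent pairs in the current string, not on how that string is represented, so running the same algorithm on any compressed encoding of the current string yields exactly the same $\rlslp$, of size $O(z\lg(N/z))$ by Lemma~\ref{lem:recomp_size} and of height $O(\lg N)$.

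Concretely, I would carry out the first $\Theta(\lg(N/n))$ rounds on an SLP of size $O(n)$ using the $\gtog$ machinery. For a variable $X\to YZ$ all pairs and runs internal to $\val{}(Y)$ or to $\val{}(Z)$ are dealt with recursively, so it suffices to ``uncross'' the single boundary between them: the operations $\pol,\pil$ (and $\pob,\pib$ for runs) pop the leftmost/rightmost symbol, resp.\ maximal block, out of each variable so that every relevant pair or run occurs explicitly inside some right-hand side. The occurrence counts needed to choose $\bcomp$ and $\pcomp$ are then obtained by one bottom-up pass computing, for every symbol, its number of occurrences in $\gtext{}$, together with the $O(n)$ boundary pairs. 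After $\Theta(\lg(N/n))$ rounds the current string has length $O(n)$; at that point I would decompress it explicitly in $O(n)$ time and finish the recompression on this residual string in $O(n)$ time and space exactly as in the proof of Theorem~\ref{theo:lce_from_uncompressed}, stacking the remaining $O(\lg n)$ levels on top of the $O(\lg(N/n))$ already produced. The total height is $O(\lg(N/n)+\lg n)=O(\lg N)$, the time for the SLP phase is $O(n)$ per round, i.e.\ $O(n\lg(N/n))$, plus $O(n)$ for the residual phase, plus $O(|\rlslp|)=O(z\lg(N/z))=O(n\lg(N/n))$ to write out the levels (using $z=O(n)$ and monotonicity of $x\mapsto x\lg(N/x)$); the peak space is $O(n)$ for the working SLP and the residual string, plus $O(z\lg(N/z))$ for the accumulated $\rlslp$.

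It remains to equip $\rlslp$ with the navigation and local-consistency auxiliary structures of Theorem~\ref{theo:lce_from_uncompressed}; since $\rlslp$ is the same RLSLP that $\recomp$ produces from $\gtext{}$, this step is verbatim, takes $O(|\rlslp|)=O(z\lg(N/z))$ time and space, and delivers $\extract(i,\ell)$ in $O(\lg N+\ell)$ and $\LCE$ in $O(\lg N)$.

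I expect the crux to be the $O(n)$-per-round bound for the SLP phase: uncrossing boundary pairs and, especially, boundary runs can inflate the grammar, so one must show by a charging argument that each ``popped'' symbol is consumed by a contraction within the same round and that long runs straddling many variables are handled without being unfolded, keeping both the size of the working SLP and the per-round running time in $O(n)$.
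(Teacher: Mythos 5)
There is a genuine gap, and it is exactly the point you defer to at the end. You assert that the SLP phase runs in $O(n)$ time per round with a working grammar of size $O(n)$, while using \emph{the same} partitions as $\recomp$ (chosen from the adjacency statistics of the current string), so that the output is literally $\recomp(\gtext{})$. This combination is not achievable by the paper's (or J{e}\.z's) techniques, and no charging argument of the kind you sketch is known to rescue it: the partition that guarantees compressing the \emph{string} $\gtext{h}$ by a factor $3/4$ gives no guarantee about compressing the explicit letters in the right-hand sides of the working grammar, while each round of uncrossing ($\pil$/$\pol$ and the block analogues) adds up to $2n$ letters (Lemma~\ref{lem:sim_pcomp}). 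Hence the straightforward simulation, $\simrecomp$, has intermediate grammars of size $\Theta(n\lg(N/n))$ and total time $O(n\lg^2(N/n))$ (Theorem~\ref{theo:simrecomp}), not $O(n\lg(N/n))$. The paper obtains the claimed bound only by switching to $\gtog$ (Section~\ref{sec:gtog}): it alternates two different partitions per level, one computed from the weighted adjacency list of $\gtext{h}$ to shrink the string, and one computed with unit weights from the right-hand sides to shrink the grammar back to $O(n)$. The price is that $\gtog(\slp)$ is \emph{not} $\recomp(\gtext{})$, so your ``the output RLSLP is the same, hence everything is verbatim'' step breaks; one must instead argue, as the paper does, that the $\PSeq$ property (Lemma~\ref{lem:popped_seq}) and the size bound of Lemma~\ref{lem:recomp_size} hold for any RLSLP produced by such partition-based recompression, which then yields the $O(\lg N)$ LCE bound via Lemma~\ref{lem:LCE_query} and the $O(z\lg(N/z))$ size bound.

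A second, independent gap is the implementation of $\bcomp$ on the grammar. To assign a single fresh letter per distinct block $c^d$ you must sort the pairs $(c,d)$, and $d$ can be exponentially larger than $n$, so radix sort no longer runs in linear time; saying that long runs are ``handled without being unfolded'' does not resolve how they are deduplicated within the time budget. The paper handles this with a dedicated argument: a block not containing a substring generated by a unary variable has length $O(|\rg{h}|+n)$ (Lemma~\ref{lem:short_block}), and only $O(\min(n,N/n))$ long blocks arise over the whole run (Lemma~\ref{lem:ub_long_blocks}), so short blocks are radix-sorted and long blocks are comparison-sorted within $O(n\lg(N/n))$ total time. Your residual-phase idea (decompress once the string has length $O(n)$ and finish with plain $\recomp$) does match the paper and is fine, but without the two-partition trick and the long-block analysis the stated time and space bounds are not established.
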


Suppose that we are given the LZ77-compression of size $z$ of $T$ as an input.
Since we can convert the input into an SLP of size $O(z \lg (N/z))$~\cite{rytter03:_applic_lempel_ziv},
we can apply Theorem~\ref{theo:lce_from_slp} to the SLP and get the next corollary.
\begin{corollary}\label{coro:lce_from_lz}
  Given the LZ77-compression of size $z$ of a string $\gtext{}$ of length $N$, 
  we can compute in $O(z \lg^2 (N/z))$ time and $O(z \lg (N/z))$ space a compressed representation of $\gtext{}$ of size $O(z \lg (N/z))$
  that supports $\extract(i, \ell)$ in $O(\lg N + \ell)$ time and $\LCE$ queries in $O(\lg N)$ time.
\end{corollary}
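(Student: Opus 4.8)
The plan is to obtain Corollary~\ref{coro:lce_from_lz} as a straightforward composition of a standard LZ77-to-SLP conversion with Theorem~\ref{theo:lce_from_slp}, so almost all of the work is already done. First I would apply Rytter's construction~\cite{rytter03:_applic_lempel_ziv} to the given LZ77-compression of size $z$: it produces, in time and space essentially linear in its output, an SLP $\slp$ of size $n = O(z\lg(N/z))$ generating $\gtext{}$ (Rytter's grammar is in fact an AVL-grammar, but we only use the bound on its size, not its balancedness). Since $x\mapsto -x\lg x$ is bounded on $(0,1]$, writing $x = z/N$ shows $n = O(z\lg(N/z)) = O(N)$, so $\slp$ is a legitimate input for Theorem~\ref{theo:lce_from_slp}.

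Next I would feed $\slp$ into Theorem~\ref{theo:lce_from_slp}. Because that theorem accepts an arbitrary SLP of size $n$ generating a string of length $N$, and here the generated string is exactly $\gtext{}$ (so the relevant LZ77 size is the same $z$ we started with), its output is a compressed representation of $\gtext{}$ of size $O(z\lg(N/z))$ supporting $\extract(i,\ell)$ in $O(\lg N + \ell)$ time and $\LCE$ queries in $O(\lg N)$ time. The query bounds and the output-size bound are thus inherited verbatim; only the construction cost needs simplification.

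Theorem~\ref{theo:lce_from_slp} runs in $O(n\lg(N/n))$ time and $O(n + z\lg(N/z))$ space. Substituting $n = O(z\lg(N/z))$, the space is $O(z\lg(N/z))$. For the time, since $z \le n$ we have $\lg(N/n) = O(\lg(N/z))$, whence $n\lg(N/n) = O\bigl(z\lg(N/z)\cdot\lg(N/z)\bigr) = O(z\lg^2(N/z))$; this also absorbs the cost of the conversion step. Combining the two stages therefore yields the claimed $O(z\lg^2(N/z))$ time and $O(z\lg(N/z))$ space.

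There is essentially no real obstacle here; the only point that needs a second look is the asymptotic bookkeeping in the last step --- in particular checking that $n = O(z\lg(N/z))$ lies in the range $[1,N]$ so that the substitution $\lg(N/n) = O(\lg(N/z))$ is justified (with $\lg(N/n)$ read as $\max\{1,\lg(N/n)\}$ when $n$ is close to $N$), and confirming that invoking Theorem~\ref{theo:lce_from_slp} with $z$ taken to be the LZ77 size of the string generated by $\slp$ is consistent with the statement of the corollary, which it is since $\slp$ generates precisely $\gtext{}$.
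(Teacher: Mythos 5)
Your proposal is correct and matches the paper's own derivation: convert the LZ77 parse to an SLP of size $n = O(z\lg(N/z))$ via Rytter's construction and invoke Theorem~\ref{theo:lce_from_slp}, with the bounds following by substituting $n$ (using $z \le n$, so $\lg(N/n) = O(\lg(N/z))$). The paper states this composition without spelling out the arithmetic; your bookkeeping fills in exactly that and nothing more is needed.
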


Technically, this work owes very much to two papers~\cite{Jez2015FFC,Jez2015Aog}.
For instance, our construction algorithm of Theorem~\ref{theo:lce_from_uncompressed}
is essentially the same as the grammar compression algorithm~\cite{Jez2015FFC} based on recompression,
which produces an SLP of size $O(g^* \lg (N/g^*))$ generating an input string $\gtext{}$, where $g^*$ is the smallest grammar size to generate $\gtext{}$.
Our contribution is in discovering the above mentioned property that can be used for fast LCE queries.
Also, we use the property to upper bound the size of our data structure in terms of $z$ rather than $g^*$.
Since it is known that $z \leq g^*$ holds, an upper bound in terms of $z$ is preferable.
The technical issues in our construction algorithm of Theorem~\ref{theo:lce_from_slp} have been tackled in~\cite{Jez2015Aog},
in which the recompression technique is used to solve the fully-compressed pattern matching problems.
However, we make some contributions on top of it:
We give a new observation that simplifies the implementation and analysis of a component of recompression called $\bcomp$ (see Section~\ref{sec:bcomp_on_CFG}).
Also, we achieve a better construction time $O(n \lg (N/n))$ than $O(n \lg N)$ (which is obtained by straightforwardly applying the analysis in~\cite{Jez2015Aog}).

\section{Preliminaries}
An alphabet $\Sigma$ is a set of characters.
A string over $\Sigma$ is an element in $\Sigma^*$.
For any string $w \in \Sigma^{*}$, $|w|$ denotes the length of $w$.
Let $\emptystr$ be the empty string, i.e., $|\emptystr| = 0$.
Let $\Sigma^{+} = \Sigma^{*} \setminus \{ \emptystr \}$.
For any $1 \leq i \leq |w|$, $w[i]$ denotes the $i$-th character of $w$.
For any $1 \leq i \leq j \leq |w|$,
$w[i..j]$ denotes the substring of $w$ beginning at $i$ and ending at $j$.
For convenience, let $w[i..j] = \emptystr$ if $i > j$.
For any $0 \leq i \leq |w|$, $w[1..i]$ (resp.\ $w[|w|-i+1..|w|]$) is called the prefix (resp.\ suffix) of $w$ of length $i$.
We say taht a string $x$ \emph{occurs} at position $i$ in $w$ iff $w[i..i+|x|-1] = x$.
A substring $w[i..j] = c^d~(c \in \Sigma, d \geq 1)$ of $w$ is called a \emph{block} iff it is a maximal run of a single character, 
i.e., $(i = 1 \vee w[i-1] \neq c) \wedge (j = |w| \vee w[j+1] \neq c)$.

The text on which LCE queries are performed is denoted by $\gtext{} \in \Sigma^{*}$ with $N = |\gtext{}|$ throughout this paper.
We assume that $\Sigma$ is an integer alphabet $[1..N^{O(1)}]$ and the standard word RAM model with word size $\Omega(\lg N)$.

The size of our compressed LCE data structure is bounded by $O(z \lg (N/z))$, 
where $z$ is the size of the LZ77 factorization of $\gtext{}$ defined as follows:
\begin{definition}[LZ77 factorization]
The factorization $\gtext{} = f_1 f_2 \cdots f_{z}$ is the LZ77 factorization of $\gtext{}$ iff the following condition holds:
For any $1 \leq i \leq z$, let $p_i = |f_1 f_2 \cdots f_{i-1}| + 1$, then
$f_i = \gtext{}[p_i]$ if $\gtext{}[p_i]$ does not appear in $\gtext{}[1..p_i-1]$,
otherwise $f_i$ is the longest prefix of $\gtext{}[p_i..N]$ that occurs in $\gtext{}[1..p_{i}-1]$.
\end{definition}

\begin{example}
The LZ77 factorization of $\mathtt{abaabaabb}$ is $\mathtt{a} \cdot \mathtt{b} \cdot \mathtt{a} \cdot \mathtt{aba} \cdot \mathtt{ab} \cdot \mathtt{b}$ and $z = 6$.
\end{example}

In this article, we deal with grammar compressed strings,
in which a string is represented by a Context Free Grammar (CFG) generating the string only.
In particular, we consider \emph{Straight-Line Programs (SLPs)} that are CFGs in Chomsky normal form.
Formally, an SLP that generates a string $\gtext{}$ is a triple 
$\slp = (\Sigma, \vars, \rules)$, where $\Sigma$ is the set of characters (terminals),
$\vars$ is the set of variables (non-terminals),
$\rules$ is the set of deterministic production rules whose righthand sides are in $\vars^2 \cup \Sigma$, and
the last variable derives $\gtext{}$.\footnote{We treat the last variable as the starting variable.}
Let $n = |\vars|$.
We treat variables as integers in $[1..n]$ (which should be distinguishable from $\Sigma$ by having extra one bit), 
and $\rules$ as an injective function that maps a variable to its righthand side.
We assume that given any variable $X$ we can access in $O(1)$ time to the data space storing the information of $X$, e.g., $\rules(X)$.
We refer to $n$ as the size of $\slp$ since $\slp$ can be encoded in $O(n)$ space.
Note that $N$ can be as large as $2^{n-1}$, and so, SLPs have a potential to achieve exponential compression.

We extend SLPs by allowing run-length encoded rules whose righthand sides are of the form $X^d$ with $X \in \vars$ and $d \geq 2$,
and call such CFGs \emph{run-length SLPs (RLSLPs)}.
Since a run-length encoded rule can be stored in $O(1)$ space, we still define the size of an RLSLP by the number of variables.

Let us consider the derivation tree $\dtree$ of an RLSLP $\slp$ that generates a string $\gtext{}$,
where we delete all the nodes labeled with terminals for simplicity.
That is, every node in $\dtree$ is labeled with a variable.
The height of $\slp$ is the height of $\dtree$.
We say that a sequence $C = v_1 \cdots v_m$ of nodes is a \emph{chain} iff the nodes are all adjacent in this order, i.e., 
the beginning position of $v_{i+1}$ is the ending position of $v_{i}$ plus one for any $1 \leq i < m$.
$C$ is labeled with the sequence of labels of $v_1 \cdots v_m$.

For any sequence $p \in \vars^*$ of variables,
let $\val{\slp}(p)$ denote the string obtained by concatenating the strings derived from all variables in the sequence.
We omit $\slp$ when it is clear from context.
We say that $p$ generates $\val{}(p)$.
Also, we say that $p$ \emph{occurs} at position $i$ iff there is a chain that is labeled with $p$ and begins at $i$.

The next lemma, which is somewhat standard for SLPs, also holds for RLSLPs.
\begin{lemma}\label{lem:extract}
  For any RSLP $\slp$ of height $h$ generating $\gtext{}$, by storing $|\val{}(X)|$ for every variable $X$,
  we can support $\extract(i, \ell)$ in $O(h + \ell)$ time.
\end{lemma}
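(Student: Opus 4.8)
The plan is to store the length $|\val{}(X)|$ for every variable $X$ (this is $O(n)$ extra words, hence $O(n)$ space, and can be computed bottom-up on $\dtree$ in $O(n)$ time, though the lemma only asserts the query complexity). Given these lengths, an $\extract(i,\ell)$ query is answered by a guided top-down traversal of the derivation tree $\dtree$. Starting from the root, which is labeled by the start variable and spans $\gtext{}[1..N]$, I maintain a current node $v$ together with the offset of the first position of $\gtext{}$ that $v$ still needs to contribute. To locate position $i$, I descend: at an ordinary rule $\rules(X) = YZ$ I compare the residual offset against $|\val{}(Y)|$ to decide whether to recurse into $Y$ or into $Z$ (adjusting the offset by $|\val{}(Y)|$ in the latter case); at a run-length rule $\rules(X) = W^d$ I use a single division by $|\val{}(W)|$ to jump directly to the correct copy of $W$ and the correct offset inside it, so the whole rule is handled in $O(1)$ time. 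After at most $h$ such steps I reach the terminal-level node whose expansion contains $\gtext{}[i]$.

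Next I output $\gtext{}[i..i+\ell-1]$ by continuing the traversal in an in-order fashion. Having emitted the character at the current leaf, I walk up to the lowest ancestor from which I descended into a left child (or, for a run-length rule, into a copy that is not the last), then step to the next sibling (or next copy) and descend left-most down to the next terminal. This is the standard "next leaf" walk on $\dtree$. The key accounting observation is that although a single "next leaf" move can cost up to $\Theta(h)$, the total number of edges traversed while emitting $\ell$ consecutive characters is $O(h + \ell)$: each edge of $\dtree$ on the leftmost path from the root to the $i$-th leaf, and on the rightmost path from the root to the $(i+\ell-1)$-th leaf, is used $O(1)$ times, and every other edge that is touched lies strictly between these two paths and is traversed exactly twice (once going down, once coming back up). For run-length rules this stays $O(1)$ amortized because repeated copies of the same child $W$ are entered and left without re-descending, and the "skip to the right copy" step during the initial descent is also $O(1)$.

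The only point needing a little care — and the main obstacle, though a mild one — is making the run-length rules behave like $O(1)$-width nodes throughout: both the initial descent (where I must land on the right copy via one arithmetic operation rather than scanning $d$ children) and the leaf-to-leaf walk (where moving from one copy of $W$ to the next must be $O(1)$, not $O(\lg d)$, and where leaving a run of copies must not be charged per copy). Since we have stored $|\val{}(W)|$ and may assume $O(1)$-time arithmetic on $O(\lg N)$-bit words, both are immediate. Putting the descent cost $O(h)$ together with the emission cost $O(h+\ell)$ gives the claimed $O(h+\ell)$ bound, and correctness is the obvious invariant that at every stage the current node spans exactly the still-unreported portion of $\gtext{}[i..i+\ell-1]$ together with possibly some already-reported prefix.
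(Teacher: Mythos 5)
Your proof is correct and is exactly the standard argument the paper relies on implicitly: the paper states Lemma~\ref{lem:extract} as ``somewhat standard for SLPs'' and gives no proof, so your write-up simply supplies the expected details. In particular, the one point the paper's remark glosses over --- treating a run-length rule $X \rightarrow W^d$ as an $O(1)$-width node, both when jumping to the correct copy by one division on $|\val{}(W)|$ during the descent and when stepping between copies during the leaf-to-leaf walk --- is handled correctly in your amortization, so nothing is missing.
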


\section{LCE data structure built from uncompressed texts}

In this section, we prove Theorem~\ref{theo:lce_from_uncompressed}.
We basically show that the RLSLP obtained by grammar compression algorithm based on recompression~\cite{Jez2015Aog} can be used for fast LCE queries.
In Subsection~\ref{sec:recomp} we first review the recompression and introduce notation we use.
In Subsection~\ref{sec:popseq} we present a new characterization of recompression, which is a key to our contributions.

\subsection{$\recomp$: Grammar compression based on recompression}\label{sec:recomp}
In~\cite{Jez2015Aog} Je\.z proposed an algorithm $\recomp$ to compute an RLSLP of $\gtext{}$ in $O(N)$ time based on 
the recompression technique.\footnote{Indeed, the paper shows how to compute an ``SLP'' of size $O(g^* \lg (N/g^*))$, where $g^*$ is the smallest SLP size to generate $\gtext{}$. In order to estimate the number of SLP's variables needed to represent run-length encoded rules, its analysis becomes much involved.}
Let $\recomp(\gtext{})$ denote the RLSLP of $\gtext{}$ produced by $\recomp$.
We use the term \emph{letters} for variables introduced by $\recomp$.
Also, we use $c$ (rather than $X$) to represent a letter.

$\recomp$ consists of two different types of compression $\bcomp$ and $\pcomp$, which stand for Block Compression and Pair Compression, respectively.
\begin{itemize}
\item $\bcomp$: Given a string $w$ over $\Sigma = [1..|w|]$, 
      $\bcomp$ compresses $w$ by replacing all blocks of length $\geq 2$ with fresh letters.
      Note that $\bcomp$ eliminates all blocks of length $\geq 2$ in $w$.
      We can conduct $\bcomp$ in $O(|w|)$ time and space (see Lemma~\ref{lem:compute_bcomp}).
\item $\pcomp$: Given an string $w$ over $\Sigma = [1..|w|]$ that contains no block of length $\geq 2$, 
      $\pcomp$ compresses $w$ by replacing all pairs from $\lletters{} \rletters{}$ with fresh letters,
      where $(\lletters{}, \rletters{})$ is a partition of $\Sigma$, i.e., $\Sigma = \lletters{} \cup \rletters{}$ and $\lletters{} \cap \rletters{} = \emptyset$.
      We can deterministically compute in $O(|w|)$ time and space a partition of $\Sigma$ 
      by which at least $(|w|-1)/4$ pairs are replaced (see Lemma~\ref{lem:compute_partition}),
      and conduct $\pcomp$ in $O(|w|)$ time and space (see Lemma~\ref{lem:compute_pcomp}).
\end{itemize}
Let $\gtext{0}$ be a sequence of letters obtained by replacing every character $c$ of $\gtext{}$ with a letter generating $c$.
Then $\recomp$ compresses $\gtext{0}$ by applying $\bcomp$ and $\pcomp$ by turns until the string gets shrunk into a single letter.
Since $\pcomp$ compresses a given string by a constant factor $3/4$, the height of $\recomp(\gtext{})$ is $O(\lg N)$, 
and the total running time can be bounded by $O(N)$ (see Lemma~\ref{lem:compute_recomp}).

In order to give a formal description we introduce some notation below.
$\recomp$ transforms level by level $\gtext{0}$ into strings, $\gtext{1}, \gtext{2}, \dots, \gtext{\hat{h}}$
where $|\gtext{\hat{h}}| = 1$.
For any $0 \leq h \leq \hat{h}$, we say that $h$ is the \emph{level} of $\gtext{h}$.
If $h$ is even, the transformation from $\gtext{h}$ to $\gtext{h+1}$ is performed by $\bcomp$,
and production rules of the form $c \rightarrow \bchar{c}^d$ are introduced.
If $h$ is odd, the transformation from $\gtext{h}$ to $\gtext{h+1}$ is performed by $\pcomp$,
and production rules of the form $c \rightarrow \lchar{c} \rchar{c}$ are introduced.
Let $\letters{h}$ be the set of letters appearing in $\gtext{h}$.
For any even $h~(0 \leq h < \hat{h})$, let $\bletters{h}$ denote the set of letters with which there is a block of length $\geq 2$ in $\gtext{h}$.
For any odd $h~(0 \leq h < \hat{h})$, let $(\lletters{h}, \rletters{h})$ denote the partition of $\letters{h}$ used in $\pcomp$ of level $h$.

Figure~\ref{fig:ttog} shows an example of how $\recomp$ compresses $\gtext{0}$.
\begin{figure}[t]
\begin{center}
  \includegraphics[scale=0.5]{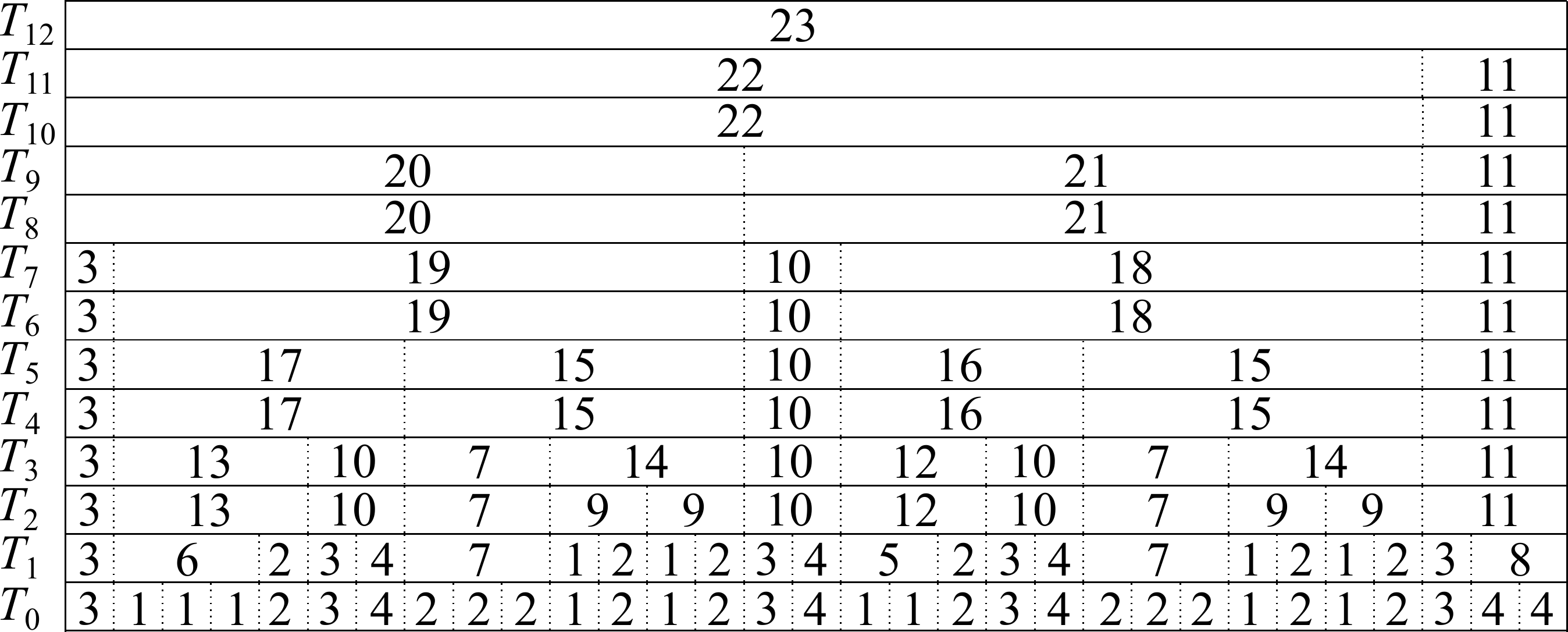}
  \caption{
  An example of how $\recomp$ compresses $\gtext{0}$.
  Below we enumerate non-empty $\bletters{h}, \lletters{h}, \rletters{h}$ and introduced production rules in each level.
  From $\gtext{0}$ to $\gtext{1}$: $\bletters{0} = \{ 1, 2, 4\}$, $\{5 \rightarrow 1^2, 6 \rightarrow 1^3, 7 \rightarrow 2^3, 8 \rightarrow 4^2\}$.
  From $\gtext{1}$ to $\gtext{2}$: $\lletters{1} = \{ 1, 3, 5, 6, 7\}$, $\rletters{1} = \{ 2, 4, 8\}$, 
  $\{9 \rightarrow (1, 2), 10 \rightarrow (3, 4), 11 \rightarrow (3, 8), 12 \rightarrow (5, 2), 13 \rightarrow (6, 2)\}$.
  From $\gtext{2}$ to $\gtext{3}$: $\bletters{2} = \{ 9\}$, $\{14 \rightarrow 9^2\}$.
  From $\gtext{3}$ to $\gtext{4}$: $\lletters{3} = \{ 3, 7, 12, 13\}$, $\rletters{3} = \{ 10, 14\}$, 
  $\{15 \rightarrow (7, 14), 16 \rightarrow (12, 10), 17 \rightarrow (13, 10)\}$.
  From $\gtext{5}$ to $\gtext{6}$: $\lletters{5} = \{ 3, 10, 11, 16, 17\}$, $\rletters{5} = \{ 15\}$, 
  $\{18 \rightarrow (16, 15), 19 \rightarrow (17, 15)\}$.
  From $\gtext{7}$ to $\gtext{8}$: $\lletters{7} = \{ 3, 10, 11\}$, $\rletters{7} = \{ 18, 19\}$, 
  $\{20 \rightarrow (3, 19), 21 \rightarrow (10, 18)\}$.
  From $\gtext{9}$ to $\gtext{10}$: $\lletters{9} = \{ 11, 20\}$, $\rletters{9} = \{ 21\}$, 
  $\{22 \rightarrow (20, 21)\}$.
  From $\gtext{11}$ to $\gtext{12}$: $\lletters{11} = \{ 22\}$, $\rletters{11} = \{ 11\}$, 
  $\{23 \rightarrow (22, 11)\}$.
  }
  \label{fig:ttog}
\end{center}
\end{figure}

The following four lemmas show how to conduct $\bcomp$, $\pcomp$, and therefore $\recomp$, efficiently,
which are essentially the same as respectively Lemma~2, Lemma~5, Lemma~6, and Theorem~1, stated in~\cite{Jez2015Aog}.
We give the proofs for the sake of completeness.

\begin{lemma}\label{lem:compute_bcomp}
  Given a string $w$ over $\Sigma = [1..|w|]$, 
  we can conduct $\bcomp$ in $O(|w|)$ time and space.
\end{lemma}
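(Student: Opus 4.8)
The plan is to realize $\bcomp$ by a single left-to-right scan of $w$ followed by a radix sort. First I would scan $w$ once and decompose it into its maximal blocks $b_1 b_2 \cdots b_m$, recording for each block its character, its length, and its starting position. This takes $O(|w|)$ time and space, and since the block lengths sum to $|w|$ there are at most $|w|$ blocks, at most $\lfloor |w|/2 \rfloor$ of which have length $\geq 2$; only the latter are to be replaced, while length-$1$ blocks stay untouched.

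The crux is to assign \emph{the same} fresh letter to two blocks exactly when they share both character and length, so that the introduced rules $c \to \bchar{c}^{\,d}$ form a well-defined (injective) grammar. The obstacle is that a block length can be as large as $|w|$, so the relevant pairs $(\text{character},\text{length})$ range over a universe of size $\Theta(|w|^2)$ and cannot be indexed directly by an array. I would get around this by collecting the pairs $(c_i,d_i)$ of the blocks of length $\geq 2$ and radix-sorting them: one pass of counting sort on the length component (keys in $[2..|w|]$) and one pass on the character component (keys in $[1..|w|]$). Each pass costs $O(|w|)$ time and $O(|w|)$ space, so the whole sort is $O(|w|)$. After sorting, equal pairs are contiguous, so a single walk through the sorted sequence lets me give one fresh letter to each maximal run of identical pairs, output the corresponding rule $c_{\mathrm{new}} \to c^{\,d}$, and record, for every original block, the fresh letter it receives.

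Finally I would rebuild the compressed string by scanning the blocks in their original order, emitting the original character for each length-$1$ block and the recorded fresh letter for each block of length $\geq 2$; the result has length at most $|w|$ and, by construction, contains no block of length $\geq 2$. Every phase runs in $O(|w|)$ time and space, giving the claim. (If one also needs the output letters to occupy a contiguous range $[1..|w'|]$ for the subsequent $\pcomp$ step, one further $O(|w|)$-time radix sort achieves the renaming; alternatively the sort can be avoided altogether by bucketing the blocks by character and, while processing a fixed character $c$, using a length-indexed auxiliary array whose entries are tagged with the current $c$ so that stale entries are distinguishable and no array ever needs resetting — again $O(|w|)$ time.) The only step demanding care is the deduplication of $(\text{character},\text{length})$ pairs over the quadratic-size universe, which the radix sort resolves.
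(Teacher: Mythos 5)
Your proposal is correct and follows essentially the same route as the paper's proof: scan $w$ to list the maximal blocks of length $\geq 2$, radix-sort their $(c,d)$ pairs (both components lie in $[1..|w|]$, so two counting-sort passes suffice), and replace each block by a fresh letter determined by the rank of its pair. The extra remarks about renaming the output alphabet and the bucketing alternative are fine but not needed beyond what the paper does.
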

\begin{proof}
  We first scan $w$ in $O(|w|)$ time and list all the blocks of length $\geq 2$.
  Each block $c^d~(c \in \Sigma, d \geq 2)$ at position $i$ is listed by a triple $(c, d, i)$ of integers in $\Sigma$.
  Next we sort the list according to the pair of integers $(c, d)$, which can be done in $O(|w|)$ time and space by radix sort.
  Finally, we replace each block $c^d$ by a fresh letter based on the rank of $(c, d)$.
\end{proof}

For any string $w \in \Sigma^*$ that contains no block of length $\geq 2$,
let $\freq_{w}(c, \tilde{c}, 0)$ (resp. $\freq_{w}(c, \tilde{c}, 1)$) with $c > \tilde{c} \in \Sigma$ denote 
the number of occurrences of $c \tilde{c}$ (resp. $\tilde{c} c$) in $w$.
We refer to the list of non-zero $\freq_{w}(c, \tilde{c}, \cdot)$ sorted in increasing order of $c$ as the \emph{adjacency list} of $w$.
Note that it is the representation of the weighted directed graph in which 
there are exactly $\freq_{w}(c, \tilde{c}, 0)$ (resp. $\freq_{w}(c, \tilde{c}, 1)$) edges from $c$ to $\tilde{c}$ (resp.\ from $\tilde{c}$ to $c$).
Each occurrence of a pair in $w$ is counted exactly once in the adjacency list.
Then the problem of computing a good partition $(\lletters{}, \rletters{})$ of $\Sigma$ reduces to maximum directed cut problem on the graph.
Algorithm~\ref{fig:compute_partition} is based on a simple greedy $1/4$-approximation algorithm of maximum directed cut problem.

\begin{lemma}\label{lem:compute_partition}
  Given the adjacency list of size $m$ of a string $w \in \Sigma^*$,
  Algorithm~\ref{fig:compute_partition} computes in $O(m)$ time a partition $(\lletters{}, \rletters{})$ of $\Sigma$ 
  such that the number of occurrences of pairs from $\lletters{} \rletters{}$ in $w$ is at least $(|w|-1)/4$.
\end{lemma}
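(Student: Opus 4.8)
The plan is to interpret the pair-counting problem as a maximum directed cut instance and run the standard greedy argument. Build the weighted digraph $G$ on vertex set $\Sigma$ described just before the statement: for each non-zero entry $\freq_w(c,\tilde c,0)$ put that many arcs $c \to \tilde c$, and for each non-zero $\freq_w(c,\tilde c,1)$ put that many arcs $\tilde c \to c$. Since $w$ contains no block of length $\ge 2$, every one of the $|w|-1$ adjacent pairs in $w$ is an ordered pair of distinct letters, and each such pair contributes exactly one arc; hence $G$ has exactly $|w|-1$ arcs. For a partition $(\lletters{},\rletters{})$, an occurrence of a pair from $\lletters{}\rletters{}$ is precisely an arc directed from $\lletters{}$ to $\rletters{}$, so we want a directed cut of size $\ge (|w|-1)/4$.

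First I would describe Algorithm~\ref{fig:compute_partition} concretely enough to analyze it: process the letters of $\Sigma$ in increasing order (which matches the sorted adjacency list, so each arc is examined $O(1)$ times), maintaining the partial assignment of already-processed letters to $\lletters{}$ or $\rletters{}$. When letter $c$ is considered, let $a_c$ be the number of already-placed letters $\tilde c$ with an arc $\tilde c \to c$ (these would be ``cut correctly'' if $c$ goes to $\rletters{}$) and $b_c$ the number with an arc $c \to \tilde c$ (cut correctly if $c$ goes to $\lletters{}$); assign $c$ to whichever side maximizes the count, breaking ties arbitrarily. The key invariant is that each arc is charged to the later of its two endpoints, so summing $\max(a_c,b_c) \ge \tfrac12(a_c+b_c)$ over all $c$ shows the resulting cut has size at least half the total number of arcs — but this counts only arcs whose endpoints are on opposite sides in the ``correct'' orientation relative to processing order, so one has to be a little careful. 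The clean way is the textbook one: consider the four possible orientations jointly — actually the standard $1/4$ bound comes from noting that assigning $c$ greedily guarantees at least half of the arcs incident to $c$ from earlier letters are ``good''; a symmetric accounting over arcs (each arc $u\to v$ is decided when the second of $u,v$ is processed, and is good with ``probability'' at least... ) — I would instead invoke the elementary deterministic fact that greedy max-dicut, deciding each vertex to maximize marginal gain, achieves at least $\frac14$ of all arcs, because at the moment a vertex is fixed it secures at least half of its not-yet-secured incident arcs in the better of the two roles, and over the whole run this telescopes to $\ge |E|/4$. Since $|E| = |w|-1$, the cut, i.e. the number of occurrences of pairs from $\lletters{}\rletters{}$, is at least $(|w|-1)/4$.

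For the running time: the adjacency list has size $m$ and is already sorted in increasing order of $c$, so a single left-to-right pass suffices. For each letter $c$ we need the two marginal counts $a_c, b_c$; these are obtained by scanning the list entries touching $c$, which over the whole algorithm costs $O(m)$ in total since each entry $\freq_w(c,\tilde c,\cdot)$ is inspected only when its larger endpoint $c$ is processed (we also need, for each entry, to know which side the smaller endpoint $\tilde c$ landed on, which is an $O(1)$ lookup in an array indexed by letter). Bookkeeping the partition in a length-$|\Sigma|$ bit array and the per-letter accumulators in $O(1)$ registers gives total time $O(m)$.

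The main obstacle I expect is not the approximation ratio — that is the classical greedy max-dicut bound — but making the accounting match the ``processing letters in increasing order'' structure precisely, so that (i) every one of the $|w|-1$ pair-occurrences is counted exactly once as an arc, using the no-long-block hypothesis to rule out loops $c\to c$, and (ii) each arc is charged to exactly one endpoint (the larger one) so the telescoping sum $\sum_c \max(a_c,b_c) \ge \tfrac12\sum_c(a_c+b_c) = \tfrac12|E|$ is valid, and then squeezing the extra factor of $2$ loss down to the stated $1/4$ (which is where the ``at least half of the still-undecided incident arcs'' step, applied once per vertex rather than once per arc, costs the second factor of $2$). I would state and verify this invariant carefully and keep the rest routine.
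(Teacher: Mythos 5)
There is a genuine gap, and it lies in the approximation bound rather than the bookkeeping. First, the algorithm you analyze is not Algorithm~\ref{fig:compute_partition}: in the paper's foreach loop the comparison is between $\sum_{\tilde{c} \in \rletters{}}\freq_{w}(c, \tilde{c}, \cdot)$ and $\sum_{\tilde{c} \in \lletters{}}\freq_{w}(c, \tilde{c}, \cdot)$, i.e.\ it counts edges between $c$ and each side \emph{ignoring orientation}, and the crucial Line~\ref{line:switch} afterwards switches $\lletters{}$ and $\rletters{}$ if the reverse orientation covers more pair occurrences. The paper's proof is correspondingly two-stage: the loop is the classical greedy $1/2$-approximation for \emph{undirected} max cut, so at least $(|w|-1)/2$ of the adjacent pairs end up with their two letters on opposite sides in one order or the other; the final switch keeps the better of the two orientations $(\lletters{},\rletters{})$ and $(\rletters{},\lletters{})$ and hence retains at least half of those, giving $(|w|-1)/4$. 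Your write-up omits the switch step entirely and instead tries to prove a direct $1/4$ guarantee for a marginal-gain greedy for maximum directed cut.

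Second, the ``elementary deterministic fact'' you invoke for that greedy is false as stated. Your counts $a_c$ and $b_c$ do not condition on which side the already-placed neighbour $\tilde{c}$ lies on: an arc $c \to \tilde{c}$ with $\tilde{c} \in \lletters{}$ can never enter the cut no matter where $c$ is placed, yet it is counted in $b_c$, so $\max(a_c,b_c) \ge \tfrac12 (a_c+b_c)$ does not bound the number of arcs actually secured. Even with the corrected marginal gain, the claim that each vertex ``secures at least half of its not-yet-secured incident arcs'' fails: take a star with arcs from the centre to $k$ leaves, process the leaves first (every decision is a tie, say all leaves go to $\lletters{}$) and the centre last; then both placements of the centre secure $0$ arcs and the final directed cut is empty, so no constant fraction of $|E|$ is obtained. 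A genuine deterministic $1/4$ bound comes either from the paper's undirected-cut-plus-orientation-switch argument or from derandomizing the uniform random assignment by conditional expectations, which requires crediting arcs with one or two undecided endpoints with $1/2$ or $1/4$ respectively---neither of which your telescoping does. (Your $O(m)$ running-time discussion is fine and matches the paper, up to one extra pass over the adjacency list for the switch step.)
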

\begin{proof}
  In the foreach loop,
  we first run a $1/2$-approximation algorithm of maximum ``undirected'' cut problem on the adjacency list,
  i.e., we ignore the direction of the edges here.
  For each $c$ in increasing order, we greedily determine whether $c$ is added to $\lletters{}$ or to $\rletters{}$ depending on 
  $\sum_{\tilde{c} \in \rletters{}}\freq(c, \tilde{c}, \cdot) \geq \sum_{\tilde{c} \in \lletters{}}\freq(c, \tilde{c}, \cdot)$.
  Note that $\sum_{\tilde{c} \in \rletters{}}\freq(c, \tilde{c})$ (resp. $\sum_{\tilde{c} \in \lletters{}}\freq(c, \tilde{c})$) represents 
  the number of edges between $c$ and a character in $\rletters{}$ (resp. $\lletters{}$).
  By greedy choice, at least half of the edges in question become the ones connecting two characters each from $\lletters{}$ and $\rletters{}$.
  Hence, in the end, $|E|$ becomes at least $(|w|-1)/2$, 
  where let $E$ denote the set of edges between characters from $\lletters{}$ and $\rletters{}$
  (recalling that there are exactly $|w|-1$ edges).
  Since each edge in $E$ corresponds to an occurrence of a pair from $\lletters{} \rletters{} \cup \rletters{} \lletters{}$ in $w$,
  at least one of the two partitions $(\lletters{}, \rletters{})$ and $(\rletters{}, \lletters{})$ covers more than half of $E$.
  Hence we achieve our final bound $|E|/2 = (|w|-1)/4$ by choosing an appropriate partition at Line~\ref{line:switch}.

  In order to see that Algorithm~\ref{fig:compute_partition} runs in $O(m)$ time, 
  we only have to care about Line~\ref{line:compare_freq} and Line~\ref{line:switch}.
  We can compute $\sum_{\tilde{c} \in \rletters{}}\freq(c, \tilde{c}, \cdot)$ and $\sum_{\tilde{c} \in \lletters{}}\freq(c, \tilde{c}, \cdot)$ 
  by going through all $\freq(c, \cdot, \cdot)$ for fixed $c$ in the adjacency list, which are consecutive in the sorted list.
  Since each element of the list is used only once, the cost for Line~\ref{line:compare_freq} is $O(m)$ in total.
  Similarly the computation at Line~\ref{line:switch} can be done by going through the adjacency list again.
  Thus the algorithm runs in $O(m)$ time.
\end{proof}

\begin{algorithm2e}[!ht]
\caption{How to compute a partition of $\Sigma$ for $\pcomp$ to compress $w$ by $3/4$.}
\label{fig:compute_partition}
\SetKw{True}{true}
\SetKw{Or}{or}
\SetKw{Output}{output}
\SetKw{Return}{return}
\KwIn{Adjacency list of $w \in \Sigma^*$.}
\KwOut{$(\lletters{}, \rletters{})$ s.t. \# occurrences of pairs from $\lletters{} \rletters{}$ in $w$ is at least $(|w|-1)/4$.}
\tcc{The information whether $c \in \Sigma$ is in $\lletters{}$ or $\rletters{}$ is written in the data space for $c$, which can be accessed in $O(1)$ time.}
$\lletters{} \leftarrow \rletters{} \leftarrow \emptyset$\;
\ForEach{$c \in \Sigma$ in increasing order}{\label{line:firstpart_begin}
  \If{$\sum_{\tilde{c} \in \rletters{}}\freq_{w}(c, \tilde{c}, \cdot) \geq \sum_{\tilde{c} \in \lletters{}}\freq_{w}(c, \tilde{c}, \cdot)$}{\label{line:compare_freq}
    add $c$ to $\lletters{}$\;
  }\Else{
    add $c$ to $\rletters{}$\;\label{line:firstpart_end}
  }
}
\If{\# occurrences of pairs from $\lletters{}\rletters{} <$ \# occurrences of pairs from $\rletters{}\lletters{}$}{\label{line:switch}
  switch $\lletters{}$ and $\rletters{}$\;
}
\Return $(\lletters{}, \rletters{})$\;
\end{algorithm2e}

\begin{lemma}\label{lem:compute_pcomp}
  Given a string $w$ over $\Sigma = [1..|w|]$ that contains no block of length $\geq 2$, 
  we can conduct $\pcomp$ in $O(|w|)$ time and space.
\end{lemma}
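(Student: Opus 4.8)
The plan is to carry out $\pcomp$ in three linear-time phases mirroring the structure of $\bcomp$ in Lemma~\ref{lem:compute_bcomp}: build the adjacency list of $w$, choose the partition by invoking Algorithm~\ref{fig:compute_partition}, and then perform the replacement of all occurrences of pairs from $\lletters{}\rletters{}$.

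First I would compute the adjacency list of $w$. Scanning $w$ once, for every position $i$ I form the pair $(w[i],w[i+1])$ together with the bit telling whether $w[i]>w[i+1]$ or $w[i]<w[i+1]$ (the two letters are distinct because $w$ has no block of length $\geq 2$); this is a triple of integers in $[1..|w|]$. Radix-sorting these $|w|-1$ triples in $O(|w|)$ time and space groups equal pairs and yields the nonzero values $\freq_{w}(c,\tilde c,\cdot)$ in the sorted order required by Lemma~\ref{lem:compute_partition}. Feeding this list (of size $m=O(|w|)$) to Algorithm~\ref{fig:compute_partition} gives, by Lemma~\ref{lem:compute_partition}, a partition $(\lletters{},\rletters{})$ of $\Sigma$ with at least $(|w|-1)/4$ occurrences of pairs from $\lletters{}\rletters{}$ in $w$, where membership of each $c$ in $\lletters{}$ or $\rletters{}$ is stored in the data space of $c$ and hence testable in $O(1)$ time.

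Next I perform the replacement. The occurrences of pairs from $\lletters{}\rletters{}$ in $w$ are pairwise non-overlapping: if $w[i]w[i+1]\in\lletters{}\rletters{}$ then $w[i+1]\in\rletters{}$, so $w[i+1]\notin\lletters{}$ and $w[i+1]w[i+2]\notin\lletters{}\rletters{}$; symmetrically $w[i]\in\lletters{}$ forbids $w[i-1]w[i]\in\lletters{}\rletters{}$. Therefore all such occurrences can be contracted simultaneously. Scanning $w$ left to right, for each position $i$ with $w[i]\in\lletters{}$ and $w[i+1]\in\rletters{}$ I record the triple $(w[i],w[i+1],i)$; radix-sorting these triples by their first two components puts occurrences of the same pair consecutively, so each distinct pair $(c,\tilde c)$ gets a fresh letter $c'$ with a new rule $c'\rightarrow\lchar{c'}\rchar{c'}=(c,\tilde c)$, and I write $c'$ into an output array at every recorded position $i$. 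A final pass over $w$ emits $\gtext{h+1}$ by outputting the fresh letter at each marked position and copying the remaining (unmatched) letters otherwise. Every phase runs in $O(|w|)$ time and space.

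The main obstacle is the consistency of the fresh letters, i.e.\ guaranteeing that two occurrences of the same pair $(c,\tilde c)$ receive exactly the same new letter, while keeping everything linear; this is precisely what the radix sort of the collected pairs provides, using that all coordinates lie in the universe $[1..|w|]$. A minor remaining point is that, if the subsequent level of $\recomp$ requires the alphabet of $\gtext{h+1}$ to be a contiguous range, one extra radix sort over the letters occurring in $\gtext{h+1}$ renumbers them in $O(|w|)$ time; this bookkeeping does not affect the stated bound.
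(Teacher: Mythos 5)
Your proposal is correct and follows essentially the same route as the paper's proof: build the adjacency list by radix-sorting the $|w|-1$ pair triples, obtain the partition via Lemma~\ref{lem:compute_partition}, then collect the occurrences of pairs from $\lletters{}\rletters{}$ as triples, radix-sort them, and assign fresh letters by rank. Your extra remarks (non-overlap of the replaced pairs, and renumbering the alphabet for the next level) are sound and the latter is exactly how the paper handles it later in Lemma~\ref{lem:compute_recomp}.
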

\begin{proof}
  We first compute the adjacency list of $w$.
  This can be easily done in $O(|w|)$ time and space by sorting the $|w|-1$ size multiset 
  $\{(w[i], w[i+1], 0) \mid 1 \leq i < |w|, w[i] > w[i+1]\} \cup \{(w[i], w[i+1], 1) \mid 1 \leq i < |w|, w[i] < w[i+1]\}$ by radix sort.
  Then by Lemma~\ref{lem:compute_partition} 
  we compute a partition $(\lletters{}, \rletters{})$ in linear time in the size of the adjacency list, which is $O(|w|)$.
  Next we scan $w$ in $O(|w|)$ time and list all the occurrences of pairs to be compressed.
  Each pair $\lchar{c} \rchar{c} \in \lletters{} \rletters{}$ at position $i$ is listed by a triple $(\lchar{c}, \rchar{c}, i)$ of integers in $\Sigma$.
  Then we sort the list according to the pair of integers $(\lchar{c}, \rchar{c})$, which can be done in $O(|w|)$ time and space by radix sort.
  Finally, we replace each pair with a fresh letter based on the rank of $(\lchar{c}, \rchar{c})$.
\end{proof}

\begin{lemma}\label{lem:compute_recomp}
  Given a string $\gtext{}$ over $\Sigma = [1..N^{O(1)}]$, we can compute $\recomp(\gtext{})$ in $O(N)$ time and space.
\end{lemma}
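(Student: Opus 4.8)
The plan is a routine amortized analysis of the level-by-level procedure described above, preceded by a normalization of the alphabet. First I would reduce $\Sigma$ to $[1..N]$: since $\Sigma = [1..N^{O(1)}]$ and the word size is $\Omega(\lg N)$, each character of $\gtext{}$ occupies $O(1)$ machine words, so radix sort arranges the $N$ characters in $O(N)$ time and space; replacing each by its rank and recording the at most $N$ rules that map a level-$0$ letter to the original character it stands for yields $\gtext{0}$ over $[1..|\letters{0}|] \subseteq [1..N]$.

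Next is the main loop, which carries the current string $w = \gtext{h}$ under the invariant that $w$ is over $[1..|\letters{h}|] \subseteq [1..|w|]$. If $|w| = 1$, stop -- this test is what prevents the $\bcomp$/$\pcomp$ alternation from spinning forever once the string has collapsed. Otherwise apply $\bcomp$ (Lemma~\ref{lem:compute_bcomp}) when $h$ is even and $\pcomp$ with the partition of Lemma~\ref{lem:compute_partition} (Lemma~\ref{lem:compute_pcomp}) when $h$ is odd. In the odd case at least $(|w|-1)/4$ pairs are replaced, so $|\gtext{h+1}| \le |w| - (|w|-1)/4 = (3|w|+1)/4$; in the even case $|\gtext{h+1}| \le |w|$ because $\bcomp$ never lengthens a string. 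Each step costs $O(|w|)$ time and space by the cited lemmas and creates $O(|w|)$ new letters and rules. To reinstate the invariant I renumber the alphabet of $\gtext{h+1}$ down to $[1..|\letters{h+1}|]$; this can be folded into the radix sorts inside $\bcomp$ and $\pcomp$ (sort the whole output sequence -- block-types, pair-types, and surviving letters alike -- and assign consecutive ranks), or done in a separate $O(|\gtext{h+1}|)$-time pass using one persistent size-$N$ direct-address array cleared lazily over the touched entries.

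For the bound, set $n_h = |\gtext{h}|$. Every $\pcomp$ step strictly shrinks the string, and two consecutive $\pcomp$ steps are separated by exactly one non-lengthening $\bcomp$ step, so $n_{h+2} \le (3 n_h + 1)/4$; hence $n_h - 1$ decays geometrically, $\hat{h} = O(\lg N)$, and $\sum_{h=0}^{\hat{h}} n_h = O(N)$. Level $h$ costs $O(n_h)$ time and space, and $\recomp(\gtext{})$ has at most $\sum_h |\letters{h}| \le \sum_h n_h = O(N)$ letters and rules, so the whole computation runs in $O(N)$ time and space; annotating each letter with $|\val{}(\cdot)|$ (needed for Lemma~\ref{lem:extract}) is one further bottom-up $O(N)$ pass. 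The only delicate point is the alphabet bookkeeping that keeps each level's working universe proportional to the current string length rather than to $N$; with that in hand the amortization is immediate, and I do not expect any real obstacle beyond it.
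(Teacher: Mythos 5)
Your proposal is correct and follows essentially the same route as the paper's proof: compute $\gtext{0}$ by rank-reducing the alphabet, alternate $\bcomp$ and $\pcomp$ while maintaining ranks of the current letters in $[1..|\gtext{h}|]$ so that each level's radix sorts run in $O(|\gtext{h}|)$ time, and conclude by the geometric shrinkage of $\pcomp$ that the total cost is $O(N)$. Your explicit recurrence $n_{h+2} \leq (3n_h+1)/4$ and the lazily-cleared direct-address array are just slightly more detailed renderings of the same argument.
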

\begin{proof}
  We first compute $\gtext{0}$ in $O(N)$ by sorting the characters used in $\gtext{}$ 
  and replacing them with ranks of characters.
  Then we compress $\gtext{0}$ by applying $\bcomp$ and $\pcomp$ by turns and get $\gtext{1}, \gtext{2} \dots \gtext{\hat{h}}$.
  One technical problem is that characters used in an input string $w$ of $\bcomp$ and $\pcomp$ should be in $[1..|w|]$, 
  which is crucial to conduct radix sort efficiently in $O(|w|)$ time~(see Lemmas~\ref{lem:compute_bcomp} and~\ref{lem:compute_pcomp}).
  However letters in $\gtext{h}$ do not necessarily hold this property.
  To overcome this problem, during computation we maintain ranks of letters among those used in the current $\gtext{h}$, which should be in $[1..|\gtext{h}|]$,
  and use the ranks instead of letters for radix sort.
  If we have such ranks in each level, we can easily maintain them by radix sort for the next level.
  Now, in every level $h~(0 \leq h < \hat{h})$ the compression from $\gtext{h}$ to $\gtext{h+1}$ can be conducted in $O(|\gtext{h}|)$ time and space.
  Since $\pcomp$ compresses a given string by a constant factor, the total running time can be bounded by $O(N)$ time.
\end{proof}

\subsection{Popped sequences}\label{sec:popseq}

We give a new characterization of recompression, 
which is a key to fast LCE queries as well as the upper bound $O(z \lg (N/z))$ for the size of $\recomp(\gtext{})$.
For any substring $w$ of $\gtext{}$, we define the \emph{Popped Sequence (PSeq)} of $w$, denoted by $\PSeq(w)$.
$\PSeq(w)$ is a sequence of letters such that $\val{}(\PSeq(w)) = w$ and consists of $O(\lg N)$ blocks of letters.
It is not surprising that any substring can be represented by $O(\lg N)$ blocks of letters
because the height of $\recomp(\gtext{})$ is $O(\lg N)$.
The significant property of $\PSeq(w)$ is that it occurs at ``every'' occurrence of $w$.
A similar property has been observed in CFGs produced by locally consistent parsing
and utilized for compressed indexes~\cite{Maruyama2013ESPIndex,Nishimoto2016DynamicIndexAndLZ} and 
a dynamic compressed LCE data structure~\cite{Nishimoto2016DynamicLCE_CompressedSpace}.
For example, in~\cite{Nishimoto2016DynamicIndexAndLZ,Nishimoto2016DynamicLCE_CompressedSpace} 
the sequence having such a property is called the \emph{common sequence} of $w$
but its representation size is $O(\lg |w| \lg^* N)$ rather than $O(\lg N)$.

$\PSeq(w)$ is the sequence of letters characterized by the following procedure.
Let $w_{0}$ be the substring of $\gtext{0}$ that generates $w$.
We consider applying $\bcomp$ and $\pcomp$ to $w_{0}$ exactly as we did to $\gtext{}$
but in each level we \emph{pop} some letters out if the letters can be coupled with letters outside the scope.
Formally, in increasing order of $h \geq 0$, we get $w_{h+1}$ from $w_{h}$ as follows:
\begin{itemize}
\item If $h$ is even. We first pop out the leftmost and rightmost blocks of $w_{h}$ if they are blocks of letter $c \in \bletters{h}$.
      Then we get $w_{h+1}$ by applying $\bcomp$ to the remaining string.
\item If $h$ is odd. We first pop out the leftmost letter and rightmost letter of $w_{h}$ if they are letters in $\rletters{h}$ and $\lletters{h}$, respectively.
      Then we get $w_{h+1}$ by applying $\pcomp$ to the remaining string.
\end{itemize}
We iterate this until the string disappears.
$\PSeq(w)$ is the sequence obtained by concatenating the popped-out letters/blocks in an appropriate order.
Note that for any occurrence of $w$ the letters inside the $\PSeq(w)$ are compressed in the same way.
Hence $w_{h}$ is created for every occurrence of $w$ and the occurrence of $\PSeq(w)$ is guaranteed (see also Figure~\ref{fig:pseq}).

\begin{figure}[t]
\begin{center}
  \includegraphics[scale=0.5]{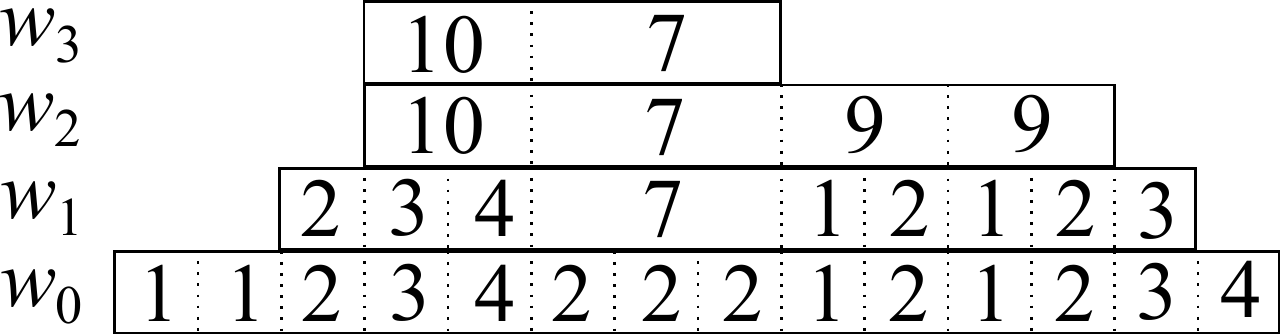}
  \caption{
  $\PSeq$ for $w_0 = (1, 1, 2, 3, 4, 2, 2, 2, 1, 2, 1, 2, 3, 4)$ under $\bletters{h}, \lletters{h}, \rletters{h}$ of Figure~\ref{fig:ttog}.
  At level $0$, a block of $1$ (resp. $4$) is popped out from the leftend (resp. rightend) of $w_0$ because $1, 4 \in \bletters{0}$.
  At level $1$, a letter $2$ (resp. $3$) is popped out from the leftend (resp. rightend) of $w_1$ because $2 \in \rletters{1}$ and $3 \in \lletters{1}$.
  At level $2$, a block of $9$ is popped out from the rightend of $w_2$ becaue $9 \in \bletters{2}$.
  At level $3$, a letter $10$ (resp. $7$) is popped out from the leftend (resp. rightend) of $w_3$ because $10 \in \rletters{1}$ and $7 \in \lletters{1}$.
  Then, $\PSeq(w_0) = (1, 1, 2, 10, 7, 9, 9, 3, 4)$.
  Observe that $w_0$ occurs twice in $\gtext{0}$ of Figure~\ref{fig:ttog}.
  and $w_0, w_1, w_2$ and $w_3$ are created over both occurrences. As a result, $\PSeq(w_0)$ occurs everywhere $w_{0}$ occurs.
  }
  \label{fig:pseq}
\end{center}
\end{figure}

The next lemma formalizes the above discussion.
\begin{lemma}\label{lem:popped_seq}
  For any substring $w$ of $\gtext{}$, $\PSeq(w)$ consists of $O(\lg N)$ blocks of letters.
  In addition, $w$ occurs at position $i$ iff $\PSeq(w)$ occurs at $i$.
\end{lemma}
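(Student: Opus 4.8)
The plan is to prove the two claims of Lemma~\ref{lem:popped_seq} separately, the first by a straightforward amortization argument over the levels of $\recomp$, and the second by a ``locality'' induction showing that the pop-out procedure is determined entirely by the contents of $w$ (plus the global, occurrence-independent data $\bletters{h},\lletters{h},\rletters{h}$).

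For the size bound, I would first argue that for each even level $h$ the pop-out operation contributes at most two blocks of letters to $\PSeq(w)$ (the leftmost and rightmost blocks of $w_h$, only when they belong to $\bletters{h}$), and for each odd level at most two single letters (hence at most two blocks). Since $\pcomp$ shrinks its input by a factor of $3/4$ and $\bcomp$ never increases length, the length of the (un-popped) core string $|w_h|$ decreases geometrically, so the procedure terminates after $\hat h = O(\lg N)$ levels; more precisely, $|w_0| \le N$ forces $O(\lg N)$ iterations. Summing $O(1)$ blocks per level gives $O(\lg N)$ blocks of letters in total. I would also note that $\val{}(\PSeq(w)) = w$ is immediate by construction, since every letter we pop out or leave behind in the final single letter is accounted for exactly once and the derivation of each letter is the same as in $\recomp(\gtext{})$.

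For the second claim, the key observation is that the pop-out procedure applied to $w_h$ depends only on $w_h$ itself and on the sets $\bletters{h}, \lletters{h}, \rletters{h}$, which are fixed globally by $\recomp$ and do not depend on where $w$ sits inside $\gtext{}$. I would prove by induction on $h$ the invariant: whenever $w$ occurs at position $i$ in $\gtext{}$, the chain of level-$h$ letters covering $\gtext{}[i..i+|w|-1]$ exists in $\gtext{h}$ (i.e.\ the boundaries of $w$ align with letter boundaries at level $h$), and the label sequence of that chain equals exactly the string $w_h$ produced by the pop-out procedure. The base case $h=0$ is clear since $\gtext{0}$ is obtained characterwise. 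For the inductive step at an even level: in $\recomp$, $\bcomp$ replaces every block of length $\ge 2$ by a fresh letter; a block of $\gtext{h}$ that lies strictly inside the occurrence of $w_h$ is replaced identically inside $w$, while a block straddling the left (resp.\ right) boundary of $w_h$ is exactly the reason we pop out the leftmost (resp.\ rightmost) block of $w_h$ when its letter is in $\bletters{h}$ --- after popping, the remaining core string contains only blocks that lie strictly inside, so applying $\bcomp$ to it reproduces exactly the level-$(h+1)$ chain covering $w$ inside $\gtext{}$. The odd-level step is analogous with $\pcomp$: a pair $\lchar c\rchar c \in \lletters{h}\rletters{h}$ straddling the left boundary means $w_h$ begins with a letter of $\rletters{h}$, which we pop out; one straddling the right boundary means $w_h$ ends with a letter of $\lletters{h}$, which we pop out; all interior pairs are replaced identically. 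This establishes that $\PSeq(w)$ occurs at $i$ whenever $w$ does. The converse --- if $\PSeq(w)$ occurs at $i$ then $w$ occurs at $i$ --- follows since $\val{}(\PSeq(w)) = w$, so any occurrence of $\PSeq(w)$ forces $\gtext{}[i..i+|w|-1] = w$.

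The main obstacle I anticipate is making the inductive invariant precise enough to handle the boundary cases cleanly, in particular arguing that after popping out the (at most two) boundary blocks/letters the core of $w_h$ is compressed by $\bcomp$/$\pcomp$ \emph{in exactly the same way} as the corresponding stretch of $\gtext{h}$. This requires care because $\bcomp$ and $\pcomp$ act on maximal blocks and on a fixed partition respectively, so I must verify that removing the boundary pieces does not merge or split any block or pair that lies in the interior --- which is true precisely because after popping, the leftmost (rightmost) remaining letter is not part of any block or targeted pair crossing the boundary. A secondary, more bookkeeping-flavored point is spelling out ``an appropriate order'' for concatenating the popped pieces so that $\val{}(\PSeq(w)) = w$ holds: the left-popped pieces must be concatenated in order of increasing level on the left, the right-popped pieces in order of decreasing level on the right, with the final surviving core letter(s) in the middle; I would state this explicitly and note it is the unique order consistent with the positions in $\gtext{}$.
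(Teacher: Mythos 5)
Your proposal is correct and follows essentially the same route as the paper, which states the lemma as a formalization of its informal discussion: at most a constant number of popped blocks per level together with $\hat{h}=O(\lg N)$ levels gives the size bound, and the occurrence-independence of the pop-out procedure (it depends only on $w_h$ and the global sets $\bletters{h},\lletters{h},\rletters{h}$, with the popped boundary pieces shielding the core from any crossing block or pair) gives the second claim. One small precision: the $O(\lg N)$ level count should be attributed to the fact that $\pcomp$ shrinks $\gtext{h}$ (and hence bounds $\hat{h}$), not to a guaranteed geometric decrease of $|w_h|$ itself, since the partition is chosen to compress the whole string rather than the particular substring.
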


\begin{lemma}\label{lem:num_ancestors}
  For any chain $C$ whose label consists of $m$ blocks of letters, the number of ancestor nodes of $C$ is $O(m)$.
\end{lemma}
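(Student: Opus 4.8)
The plan is to count ancestors level by level and to show the per‑level contributions form a geometrically decaying sequence that sums to $O(m)$. Set $C^{(0)}=C$ and, for $t\ge 1$, let $C^{(t)}$ be the chain obtained from $C^{(t-1)}$ by replacing every node with its parent in $\dtree$ and merging any consecutive nodes that happen to share a parent; $C^{(t)}$ is again a chain, since the parents of a contiguous run of nodes are themselves contiguous. Every proper ancestor of $C$ appears in exactly one $C^{(t)}$, so it suffices to bound $\sum_{t\ge 1}|C^{(t)}|$. I would track $m_t$, the number of blocks of the label of $C^{(t)}$, and exploit both $|C^{(t)}|=O(m_{t-1})$ and a decay estimate on the $m_t$.

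First I would treat a $\bcomp$ step, i.e.\ the case where $C^{(t-1)}$ sits at an even level $h$. Each maximal block $c^{e}$ of its label is, inside $\gtext{h}$, a maximal run of $c$ or the end of one — for interior blocks this is automatic because neighbouring blocks carry different letters — so $\bcomp$ assigns all $e$ of its nodes a single parent: the fresh letter standing for that run if $e\ge 2$ or the run extends past $C$, and $c$ itself if $e=1$ and the run is exactly $c$. Hence $|C^{(t)}|$ equals the number of blocks of $C^{(t-1)}$, and two adjacent such parents always carry distinct letters (a fresh block‑letter differs from every letter of $\letters{h}$, and two distinct interior runs $c^{e},\tilde c^{\tilde e}$ give distinct pairs because $c\neq\tilde c$), so no merging and no new blocks arise and $m_t=m_{t-1}$.

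Then I would treat a $\pcomp$ step at an odd level $h$, which is where the decay must come from: at an odd level $\gtext{h}$ contains no block of length $\ge 2$, so every block of the label of $C^{(t-1)}$ has length $1$ and $m_{t-1}=|C^{(t-1)}|$; the partition used by $\pcomp$ pairs up a constant fraction of the internal adjacencies of $C^{(t-1)}$, each pair fusing two nodes into one parent, which should give $m_t\le|C^{(t)}|\le\frac{3}{4}m_{t-1}+O(1)$. Combining the two cases, the block count drops by a constant factor every two levels, so $\sum_t|C^{(t)}|=O(m)$ by a geometric bound (using the height $\hat h=O(\lg N)$ to see that only finitely many further levels contribute once the count is constant). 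The hard part will be the $\pcomp$ decay estimate: a single chain could in principle lie entirely in one side of the partition and lose nothing in one step, so I expect the argument to localize the global $3/4$‑compression guarantee of Lemma~\ref{lem:compute_partition} to the window occupied by $C^{(t-1)}$, tracking how unpaired positions at one level are forced to pair within a bounded number of subsequent levels; getting this bookkeeping to yield a clean constant‑factor drop (rather than a drop only in an amortized sense) is the crux.
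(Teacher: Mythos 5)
There is a genuine gap, and it is exactly the step you flag as the crux: the claimed per-level decay $m_t\le\frac34 m_{t-1}+O(1)$ at $\pcomp$ levels is not only unproven but false in general. Lemma~\ref{lem:compute_partition} is a \emph{global} guarantee about all of $\gtext{h}$; it says nothing about the window occupied by your chain. Inside that window every adjacent pair may lie in $\lletters{h}\lletters{h}$, $\rletters{h}\rletters{h}$ or $\rletters{h}\lletters{h}$, so no pair of the window is compressed at that level, and this can repeat over many consecutive $\pcomp$ levels (the letters of the window being paired only with letters outside it, or not at all). Hence no clean constant-factor drop per level exists, and your hoped-for localization of the $3/4$ bound cannot be carried out; only an aggregate bound over all levels holds. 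A secondary problem is the setup itself: you treat $C^{(t-1)}$ as ``sitting at an even (or odd) level $h$,'' but the nodes of a chain are in general absorbed at different levels (the chains of interest, those labeled by $\PSeq(w)$, mix letters popped at many levels), so taking all parents simultaneously does not correspond to a single $\bcomp$ or $\pcomp$ step, and the level-synchronous induction is not well defined without first passing to the sequence of level-$h$ letters covering the span.

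The paper's argument needs none of this level-by-level bookkeeping. It first observes that all nodes inside one block of the chain's label share a single parent (adjacent equal letters are alive together at some even level and the whole run is replaced by one fresh letter by $\bcomp$), so $C$ has at most $m$ parent nodes; it then uses the fact that every internal node of the derivation tree $\dtree$ of an RLSLP has at least two children, so the ancestors of those at most $m$ parents number $O(m)$ --- a single global counting argument (internal nodes of a forest with at most $m$ leaves and branching at least two), which is precisely the ``amortized'' bound your per-level scheme was trying, and failing, to localize. If you want to repair your write-up, replace the decay estimate by this counting argument; the geometric-sum framework can then be discarded entirely.
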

\begin{proof}
  Since a block is compressed into one letter, the number of parent nodes of $C$ is at most $m$.
  As every internal node has two or more children, it is easy to see that there are $O(m)$ ancestor nodes of the parent nodes of $C$.
\end{proof}

\begin{corollary}\label{cor:num_ancestors_pseq}
  For any chain $C$ corresponding to $\PSeq(\gtext{}[b..e])$ for some interval $[b..e]$, the number of ancestor nodes of $C$ is $O(\lg N)$.
\end{corollary}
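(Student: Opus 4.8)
The plan is to combine Lemma~\ref{lem:popped_seq} with Lemma~\ref{lem:num_ancestors}, since the corollary is essentially an immediate consequence of the two. First I would observe that the chain $C$ in the statement is well-defined: the substring $\gtext{}[b..e]$ trivially occurs at position $b$, so by the second part of Lemma~\ref{lem:popped_seq} the sequence $\PSeq(\gtext{}[b..e])$ also occurs at position $b$, which by definition means there is a chain in $\dtree$ beginning at position $b$ whose label is $\PSeq(\gtext{}[b..e])$; this is exactly the chain $C$.

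Next I would invoke the first part of Lemma~\ref{lem:popped_seq}, which states that the label of $C$, namely $\PSeq(\gtext{}[b..e])$, consists of $O(\lg N)$ blocks of letters. Setting $m = O(\lg N)$ in Lemma~\ref{lem:num_ancestors} then yields that the number of ancestor nodes of $C$ is $O(m) = O(\lg N)$, which is the claim.

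The only subtlety worth checking is that the notion of ``block of letters'' counted in the bound of Lemma~\ref{lem:popped_seq} coincides with the $m$ ``blocks of letters'' appearing in the hypothesis of Lemma~\ref{lem:num_ancestors} — both refer to maximal runs of a single letter in the label sequence — so no reindexing is needed. I do not expect any real obstacle here: the content of the corollary is carried entirely by the two preceding lemmas, and the proof is merely their composition.
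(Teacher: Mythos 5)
Your proposal is correct and matches the paper's (implicit) reasoning exactly: the corollary is stated without proof precisely because it is the immediate composition of Lemma~\ref{lem:popped_seq} (the label $\PSeq(\gtext{}[b..e])$ consists of $O(\lg N)$ blocks and occurs at $b$) with Lemma~\ref{lem:num_ancestors} applied with $m = O(\lg N)$. Your check that ``blocks of letters'' means the same thing in both lemmas is a reasonable bit of diligence, but no further argument is needed.
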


\begin{lemma}\label{lem:recomp_size}
  The size of $\recomp(\gtext{})$ is $O(z \lg (N/z))$, where $z$ is the size of the LZ77 factorization of $\gtext{}$.
\end{lemma}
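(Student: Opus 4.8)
The plan is to bound the number of distinct letters introduced at each level $h$ by a quantity that is both $O(|\gtext{h}|)$ (trivially) and $O(z)$, and then combine the two bounds across all $O(\lg N)$ levels. Since each production rule of $\recomp(\gtext{})$ corresponds to exactly one new letter, and new letters are introduced only as blocks (in $\bcomp$) or pairs (in $\pcomp$), it suffices to count distinct blocks and distinct pairs compressed at each level. Let $n_h$ denote the number of distinct letters created when transforming $\gtext{h}$ into $\gtext{h+1}$. Clearly $n_h \le |\gtext{h}|$. Because $\pcomp$ shrinks the string by a factor of at least $3/4$ and $\bcomp$ never lengthens it, we have $|\gtext{h}| \le N \cdot (3/4)^{\lfloor h/2 \rfloor}$, so $\sum_h n_h$ restricted to levels where $|\gtext{h}| \le z$ contributes only $O(z)$ in total (geometric tail). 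The remaining work is to show $n_h = O(z)$ for every level $h$, and then the total is $\sum_h \min(O(z), |\gtext{h}|) = O(z \lg(N/z))$, since there are $O(\lg N)$ levels and only $O(\lg(N/z))$ of them have $|\gtext{h}| > z$.

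The heart of the argument is the bound $n_h = O(z)$, and this is where Lemma~\ref{lem:popped_seq} (the $\PSeq$ characterization) comes in. Fix a level $h$ and fix a new letter $c$ created from $\gtext{h}$ to $\gtext{h+1}$ (so $c$ rewrites a block $\bchar{c}^d$ or a pair $\lchar{c}\rchar{c}$). The string $\val{}(c)$ is a substring of $\gtext{}$; consider its leftmost occurrence, say $\gtext{}[b..e] = \val{}(c)$. By the LZ77 factorization, $\gtext{}[b..e]$ either is a single character that does not occur earlier, or has a copy entirely inside $\gtext{}[1..b-1]$ unless it straddles a factor boundary. I want to charge $c$ to a factor boundary of the LZ77 parse. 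The key is: if $\val{}(c)$ has an occurrence contained in some earlier region, then by Lemma~\ref{lem:popped_seq}, $\PSeq(\val{}(c))$ occurs at that earlier position too; since $\PSeq(\val{}(c))$ is precisely the single block/pair that $\recomp$ compresses into $c$ at level $h$ (the internal structure of $\val{}(c)$ is already fully compressed — it sits at level $h$), this would mean $c$'s defining occurrence is not leftmost, a contradiction. Hence the leftmost occurrence of $\val{}(c)$ must cross a boundary between two LZ77 factors, or begin a factor that is a fresh single character. Each LZ77 factor boundary can be crossed in this way by only $O(1)$ distinct letters $c$ at a fixed level $h$ — because at level $h$ the positions of the chains forming $\gtext{h}$ that touch a given text position are determined, so only a bounded number of level-$h$ letters can have a leftmost occurrence straddling any particular boundary. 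Summing over the $z$ boundaries gives $n_h = O(z)$.

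The main obstacle is making the "charge to a boundary, only $O(1)$ per boundary" step precise. The subtlety is that $\PSeq$ is defined for a substring $w$ and tells us $w$ and $\PSeq(w)$ co-occur, but here I need the sharper statement that the single compression event producing $c$ is itself locally determined — i.e., that two occurrences of $\val{}(c)$ in $\gtext{}$ give rise to the \emph{same} letter $c$ at level $h$, which is exactly the "compressed in the same way" remark following Lemma~\ref{lem:popped_seq}, specialized to $w = \val{}(c)$ where $\PSeq(w)$ is a single block or pair. Granting that, a leftmost occurrence of $\val{}(c)$ that does not cross a boundary would be copied from an earlier occurrence (by LZ77's definition), and that earlier occurrence would also produce letter $c$, contradicting leftmost-ness. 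The counting "$O(1)$ letters per boundary per level" then follows because the level-$h$ chain structure partitions the text consistently across occurrences, so a fixed text position lies in at most one level-$h$ letter's span, and a boundary at position $p$ can be straddled only by the (unique) level-$h$ letter covering position $p{-}1$ together with the one covering $p$ — and for a \emph{newly created} letter at level $h$, its span is a block or pair of level-$h$ letters, so still only $O(1)$ choices. I would write this out carefully, as it is the one place where a naive argument risks overcounting by a factor depending on $d$ or on the alphabet.
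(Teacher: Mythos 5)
There is a genuine gap, and it sits exactly where you flagged the risk: the claim that $\PSeq(\val{}(c))$ ``is precisely the single block/pair that $\recomp$ compresses into $c$ at level $h$'' is false in general. The $\PSeq$ procedure is a worst-case construction that, at every level below $h$, pops boundary blocks/letters of $w=\val{}(c)$ whenever they belong to $\bletters{h'}$, $\rletters{h'}$ or $\lletters{h'}$, regardless of how the defining occurrence of $c$ was actually parsed; so $\PSeq(\val{}(c))$ is typically a sequence of $O(\lg N)$ blocks of \emph{lower-level} letters, not the pair $\lchar{c}\rchar{c}$ or block $\bchar{c}^d$. Concretely, in Figure~\ref{fig:ttog}, the letter $16 \rightarrow (12,10)$ generates $\val{}(16)=11234$, but $\PSeq(11234)$ is $1^2\cdot 2\cdot 3\cdot 4$, not $(12,10)$. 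Once this identification fails, your contradiction argument breaks: an earlier occurrence of the \emph{string} $\val{}(c)$ only guarantees an earlier occurrence of the chain $\PSeq(\val{}(c))$, not an earlier \emph{node} labeled $c$. Hence the leftmost node occurrence of a new level-$h$ letter need not straddle an LZ77 factor boundary, and your counting step (``$O(1)$ new letters per boundary per level,'' which implicitly identifies the leftmost string occurrence of $\val{}(c)$ with a node-aligned occurrence of $c$) has no justification. The per-level bound $n_h=O(z)$ is therefore not established by the lemmas you invoke; it may well be true, but proving it needs a local-consistency argument with additional context around $\val{}(c)$, which you do not supply. (Your outer accounting, $\sum_h \min(O(z),|\gtext{h}|)=O(z\lg(N/z))$, is fine modulo that missing bound.)

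For comparison, the paper sidesteps per-level counting entirely: it charges new letters to LZ77 factors rather than to (level, boundary) pairs. Since $f_i$ occurs in $f_1\cdots f_{i-1}$, Lemma~\ref{lem:popped_seq} puts all letters of $\PSeq(f_i)$ into the set already seen, and any genuinely new letter must label an ancestor of the chain $C_i$ spelling $\PSeq(f_i)$ at position $|f_1\cdots f_{i-1}|+1$; by Lemma~\ref{lem:num_ancestors} there are only $O(\lg N)$ such ancestors, i.e., $O(\lg N)$ new letters per factor across \emph{all} levels at once. The refinement to $O(z\lg(N/z))$ then comes from discarding all levels $h\geq 2\lg_{4/3}(N/z)$ (which contribute only $O(z)$ letters in total, since $|\gtext{h}|\leq z$ there) so that the per-factor charge drops to $O(\lg(N/z))$. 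If you want to salvage your route, you would need either to prove the per-level $O(z)$ bound with a proper context argument, or to switch to the per-factor ancestor-counting charge as in the paper.
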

\begin{proof}
  We first show the bound $O(z \lg N)$ and improve the analysis to $O(z \lg (N/z))$ later.

  Let $f_1 \dots f_z$ be the LZ77 factorization of $\gtext{}$.
  For any $1 \leq i \leq z$, let $L_{i}$ be the set of letters used in the ancestor nodes of leaves corresponding to $f_1 f_2 \dots f_i$.
  Clearly $|L_{1}| = O(\lg N)$. For any $1 < i \leq z$, we estimate $|L_{i} \setminus L_{i-1}|$.
  Since $f_i$ occurs in $f_1 \dots f_{i-1}$, we can see that the letters of $\PSeq(f_i)$ are in $L_{i-1}$ thanks to Lemma~\ref{lem:popped_seq}.
  Let $C_i$ be the chain corresponding to the occurrence $|f_1 \dots f_{i-1} + 1|$ of $\PSeq(f_i)$.
  Then, the letters in $L_{i} \setminus L_{i-1}$ are only in the labels of ancestor nodes of $C_i$.
  Since $\PSeq(f_i)$ consists of $O(\lg N)$ blocks of letters,
  $|L_{i} \setminus L_{i-1}|$ is bounded by $O(\lg N)$ due to Lemma~\ref{lem:num_ancestors}.
  Therefore the size of $\recomp(\gtext{})$ is $O(z \lg N)$.

  In order to improve the bound to $O(z \lg (N/z))$, we employ the same trick that has been used in the literature.
  Let $h = 2 \lg_{4/3} (N/z) = 2 \lg_{3/4} (z/N)$. Recall that $\pcomp$ compresses a given string by a constant factor $3/4$.
  Since $\pcomp$ has been applied $h/2$ times until the level $h$, $|\gtext{h}| \leq N (3/4)^{h/2} = z$, and hence, 
  the number of letters introduced in level $\geq h$ is bounded by $O(z)$.
  Then, we can ignore all the letters introduced in level $\geq h$ in the analysis of the previous paragraph, 
  and by doing so, the bound $O(\lg N)$ of $|L_{i} \setminus L_{i-1}|$ is improved to $O(h) = O(\lg (N/z))$.
  This yields the bound $O(z \lg (N/z))$ for the size of $\recomp(\gtext{})$.
\end{proof}

\begin{lemma}\label{lem:LCE_query}
  Given $\recomp(\gtext{})$, we can answer $\LCE(i,j)$ in $O(\lg N)$ time.
\end{lemma}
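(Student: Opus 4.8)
The plan is to answer $\LCE(i,j)$ by binary searching on the answer length $\ell$, using at each step the equivalence ``$\gtext{}[i..i+\ell-1]=\gtext{}[j..j+\ell-1]$ iff $\PSeq(\gtext{}[i..i+\ell-1])$ occurs at both $i$ and $j$'' from Lemma~\ref{lem:popped_seq}. The obstacle is that we cannot afford to recompute $\PSeq$ from scratch for each candidate $\ell$; instead I would precompute, once, the chain of $O(\lg N)$ ancestor nodes of the leaf at position $i$ (and likewise at position $j$) in the derivation tree $\dtree$ of $\recomp(\gtext{})$, together with the offsets of these nodes. This ``spine'' from position $i$ to the root is exactly the data from which $\PSeq$ of any prefix starting at $i$ can be read off: the popped letters/blocks of $\PSeq(\gtext{}[i..i+\ell-1])$ are, level by level, determined by where the interval boundary $i+\ell-1$ falls relative to the nodes crossing that boundary.

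The key steps, in order, are as follows. First I would build, in $O(\lg N)$ time using the stored lengths $|\val{}(c)|$ (Lemma~\ref{lem:extract}), the two spines: the sequence of nodes on the path from the root of $\dtree$ down to the leaf covering position $i$, recording for each such node its label, its starting position, and at each block-compression level the index of the position within a run; do the same for $j$. Second, I would walk these two spines simultaneously from the root downward, comparing labels. As long as the current nodes on the $i$-spine and the $j$-spine carry the same letter, the strings they derive agree and I descend on both; the first level at which the labels differ localizes the mismatch to within one node, i.e. to an interval of the current level's text. Third, inside that one pair of differing nodes I recurse on their children (again using the stored $\val{}$-lengths to navigate), continuing down the two spines until I reach level $0$; the position where the derived characters first disagree is $i+\ell$, giving $\ell=\LCE(i,j)$. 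Because each spine has $O(\lg N)$ nodes and each level costs $O(1)$ work, the total time is $O(\lg N)$.

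The correctness hinges on the ``every occurrence'' half of Lemma~\ref{lem:popped_seq}: if $\gtext{}[i..i+\ell-1]=\gtext{}[j..j+\ell-1]=w$, then $w_h$ is created over both occurrences at every level $h$, so the two spines are ``parallel'' up to the boundary, meaning the ancestors of the internal part of $w$ carry identical labels at the two occurrences. Conversely, once the labels on the two spines diverge at some level, the corresponding stretches of $\gtext{0}$ cannot be equal, so the mismatch is genuine and we have not overshot. Corollary~\ref{cor:num_ancestors_pseq} (or directly Lemma~\ref{lem:num_ancestors}, since a spine from a single leaf to the root has $O(\lg N)$ nodes) bounds the spine length.

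The main obstacle I anticipate is the bookkeeping at block-compression levels: when a node $c\to\bchar{c}^d$ is on the spine, position $i$ sits at some offset $k\in[1..d]$ inside the run, and the ``popped block'' from the left end of a window starting at $i$ is the suffix $\bchar{c}^{d-k+1}$ of that run, whereas at the matching occurrence it is a suffix of possibly a different-length run of the same letter $\bchar{c}$ — I must check that the matching comparison still works, i.e. that equality of the two windows forces these run-suffixes to agree, which follows because $\bchar{c}\notin\bletters{h}$ would contradict maximality and $\bchar{c}\in\bletters{h}$ is exactly the popping condition. Making this case analysis precise — left end versus right end, block levels versus pair levels, and the handoff between the root-descent phase and the mismatch-localization phase — is the part that needs care, but each piece is a constant-time local check, so the $O(\lg N)$ bound is not in jeopardy.
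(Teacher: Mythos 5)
Your algorithm has a genuine flaw at its core: you compare, level by level from the root, the \emph{labels} of the ancestors of leaf $i$ with the labels of the ancestors of leaf $j$, and you claim that the first level at which these labels differ localizes the first character mismatch. This is not true. The ancestors of position $i$ generally begin to the left of $i$ (at a different relative offset than the ancestors of $j$ relative to $j$), and their labels are determined by text lying before $i$ and before $j$ respectively. So the two root-to-leaf paths can carry different labels at every level below the LCA of the two leaves even when $\LCE(i,j)$ is huge --- already at depth $1$ the two children of the root containing $i$ and $j$ will typically differ while the suffixes agree for a long stretch. Consequently ``labels agree $\Rightarrow$ descend'' makes no progress toward the answer (agreement of labels without agreement of offsets says nothing about the suffixes at $i$ and $j$), and ``labels differ $\Rightarrow$ mismatch localized'' is simply false, so the claimed converse direction of your correctness argument breaks. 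Note also that Lemma~\ref{lem:popped_seq} does not say what you need: it guarantees that the chain labeled $\PSeq(w)$, which lies strictly \emph{inside} each occurrence of $w$, is identical at both occurrences; it says nothing about the labels on the spine of position $i$, which is exactly where the $O(\lg N)$ discrepancies between the two occurrences live. (Relatedly, Corollary~\ref{cor:num_ancestors_pseq} bounds the ancestors of that chain, not of a single leaf-to-root path, although the latter bound also holds by the height.)

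The paper's proof takes a different, and correct, route: instead of comparing ancestor labels of $i$ against ancestor labels of $j$ top-down, it matches the common sequence $\PSeq(w)$ \emph{left to right}, anchored at the text offsets relative to $i$ and $j$. One traverses the ancestors of the two chains $C_i$ and $C_j$ simultaneously, at each step comparing the next popped letter or run-length block available at occurrence $i$ with the one at occurrence $j$ (a block comparison is $O(1)$ thanks to the run-length rules), without knowing $|w|$ in advance; the matching is guaranteed to proceed through all of $\PSeq(w)$ because, by Lemma~\ref{lem:popped_seq}, that very sequence occurs at both $i$ and $j$, and it fails immediately afterwards. The number of nodes visited is $O(\lg N)$ by Corollary~\ref{cor:num_ancestors_pseq}. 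Your instinct to exploit Lemma~\ref{lem:popped_seq} and to handle run exponents in $O(1)$ time is the right one, but to repair the proof you must replace the top-down label comparison of the two spines by this offset-aligned, left-to-right matching of the popped blocks (climbing from the leaves at $i$ and $j$), which is where the $O(\lg N)$ bound and the correctness actually come from.
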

\begin{proof}
  Let $w$ be the longest common prefix of two suffixes beginning at $i$ and $j$.
  In the light of Lemma~\ref{lem:popped_seq}, $\PSeq(w)$, which consists of $O(\lg N)$ blocks of letters, occurs at both $i$ and $j$.
  Let $C_i$ (resp. $C_j$) be the chain that is labeled with $\PSeq(w)$ and begins at $i$ (resp. $j$).
  We can compute $|w|$ by traversing the ancestor nodes of $C_i$ and $C_j$ simultaneously and 
  matching $\PSeq(w)$ written in the labels of $C_i$ and $C_j$.
  Note that we do matching from left to right and we do not have to know $|w|$ in advance.
  Also, matching a block of letters in $\PSeq(w)$ can be done in $O(1)$ time on run-length encoded rules.
  By Corollary~\ref{cor:num_ancestors_pseq}, the number of ancestor nodes we have to visit is bounded by $O(\lg N)$.
  Thus, we get the lemma.
\end{proof}

\subsection{Proof of Theorem~\ref{theo:lce_from_uncompressed}}

\begin{proof}[Proof of Theorem~\ref{theo:lce_from_uncompressed}]
  By Lemma~\ref{lem:compute_recomp} we can compute $\recomp(\gtext{})$ in $O(N)$ time and space.
  Since the height of $\recomp(\gtext{})$ is $O(\lg N)$, we can support $\extract(i, \ell)$ queries in $O(\lg N + \ell)$ time due to Lemma~\ref{lem:extract}.
  $\LCE$ queries can be supported in $O(\lg N)$ time by Lemma~\ref{lem:LCE_query}.
\end{proof}

\section{LCE data structure built from SLPs}

In this section, we prove Theorem~\ref{theo:lce_from_slp}.
Input is now an arbitrary SLP $\slp = \{\Sigma, \vars, \rules\}$ of size $n$ generating $\gtext{}$.
Basically what we consider is to simulate $\recomp$ on $\slp$, namely, compute $\recomp(\gtext{})$ without decompressing $\slp$ explicitly.
The recompression technique is celebrated for doing this kind of tasks (actually this is where ``recompression'' is named after).
In Section~\ref{sec:simulate_recomp}, we present an algorithm $\simrecomp$ that simulates $\recomp$ in $O(n \lg^2 (N/n))$ time and $O(n + z \lg (N/z))$ space.
In Section~\ref{sec:gtog}, 
we present how to modify $\simrecomp$ to obtain an $O(n \lg (N/n))$-time and $O(n + z \lg (N/z))$-space construction of our LCE data structure.

\subsection{$\simrecomp$: Simulating $\recomp$ on CFGs}\label{sec:simulate_recomp}

We present an algorithm $\simrecomp$ to simulate $\recomp$ on $\slp$.
To begin with, we compute the CFG $\rg{0} = \{ \letters{0}, \vars, \hrules{0}\}$ obtained by 
replacing, for all variables $X \in \vars$ with $\rules(X) \in \Sigma$, 
every occurrence of $X$ in the righthand sides of $\rules$ with the letter generating $\rules(X)$.
Note that $\letters{0}$ is the set of terminals of $\rg{0}$, and $\rg{0}$ generates $\gtext{0}$.
$\simrecomp$ transforms level by level $\rg{0}$ into CFGs, 
$\rg{1} = \{ \letters{1}, \vars, \hrules{1}\}, \rg{2} = \{ \letters{2}, \vars, \hrules{2}\}, \dots, \rg{\hat{h}} = \{ \letters{\hat{h}}, \vars, \hrules{\hat{h}}\}$,
where each $\rg{h}$ generates $\gtext{h}$.
Namely, compression from $\gtext{h}$ to $\gtext{h+1}$ is simulated on $\rg{h}$.
We can correctly compute the letters in $\nletters{h+1}$ while modifying $\rg{h}$ into $\rg{h+1}$,
and hence, we get all the letters of $\recomp(\gtext{})$ in the end.
We note that new variables are never introduced and 
the modification is done by rewriting righthand sides of the original variables.

Here we introduce the special formation of the CFGs $\rg{h}$ (it is a generalization of SLPs in a different sense from RLSLPs):
For any $X \in \vars$, $\hrules{h}(X)$ consists of an ``arbitrary number'' of letters and at most ``two'' variables.
More precisely, the following condition holds:
\begin{list}{}{}
\item For any variable $X \in \vars$ with $\rules(X) = \lchar{X}\rchar{X}$, 
      $\hrules{h}(X)$ is either $w_1 \lchar{X} w_2 \rchar{X} w_3$, $w_1 \lchar{X} w_2$, $w_2 \rchar{X} w_3$ or $w_2$ 
      with $w_1, w_2, w_3 \in \letters{h}^*$, where $w_1 = w_3 = \emptystr$ if $X$ is not the starting variable.
\end{list}
As opposed to SLPs and RLSLPs, we define the size of $\rg{h}$ by the total lengths of righthand sides and denote it by $|\rg{h}|$.

\subsubsection{$\pcomp$ on CFGs}
We firstly show that the adjacency list of $\gtext{h}$ can be computed efficiently.
\begin{lemma}[Lemma~6.1 of~\cite{Jez2015FFC}]\label{lem:adjacency_list_cfg}
  For any odd $h~(0 \leq h < \hat{h})$, the adjacency list of $\gtext{h}$, whose size is $O(|\rg{h}|)$, 
  can be computed in $O(|\rg{h}|+n)$ time and space.
\end{lemma}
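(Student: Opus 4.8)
The plan is to compute the adjacency list of $\gtext{h}$ by accounting separately for pair occurrences that lie \emph{inside} a single righthand side and those that straddle the boundary between a variable and an adjacent letter or variable in a righthand side that contains it. The point is that, because $\rg{h}$ has the special formation where each $\hrules{h}(X)$ interleaves at most two variables among otherwise arbitrary letters, every occurrence of a pair $c\tilde{c}$ in $\gtext{h}$ falls into one of two classes: (i) both letters are \emph{explicit} and adjacent within some $\hrules{h}(X)$, or (ii) one or both letters come from expanding a variable that occurs in some righthand side. For class (ii), what we really need is, for each variable $X$, the first letter and the last letter of $\val{h}(X)$ — call them $\lml{X}$ and $\rml{X}$. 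These can be computed for all $X \in \vars$ in $O(n)$ total time by a single bottom-up pass over $\vars$ (each variable's first/last letter is determined in $O(1)$ from the first/last symbol of its righthand side, recursing into a variable only the first time it is the leading/trailing symbol), since $\gtext{h}$ contains no block of length $\ge 2$ is irrelevant here but the DAG structure of $\vars$ makes the pass linear.

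Concretely, I would first scan every righthand side $\hrules{h}(X)$, walking through its $O(|\hrules{h}(X)|)$ symbols. For each adjacent pair of \emph{explicit letters}, emit a triple $(c,\tilde c,b)$ into a multiset exactly as in the uncompressed case. For each place where an explicit letter $c$ is immediately followed by a variable $Y$, emit the pair $(c, \lml{Y})$ oriented appropriately; symmetrically for a variable $Y$ immediately followed by an explicit letter $c$, emit $(\rml{Y}, c)$; and where two variables $Y, Z$ are adjacent, emit $(\rml{Y}, \lml{Z})$. This captures every pair occurrence that is ``born'' at or near a boundary within a righthand side. The multiset so produced has size $O(|\rg{h}|)$ since we emit $O(1)$ triples per symbol of each righthand side. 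Finally I would radix sort this multiset of $O(|\rg{h}|)$ triples over the alphabet $\letters{h}$, which I would first remap to ranks in $[1..|\letters{h}|] \subseteq [1..|\rg{h}|]$ (this remap costs one more radix sort), and then coalesce equal triples and read off the nonzero $\freq_{\gtext{h}}(c,\tilde c,\cdot)$ values in sorted order, producing the adjacency list in $O(|\rg{h}| + n)$ time and space.

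The main obstacle — and the step that needs care rather than cleverness — is arguing that this counting is \emph{exact}: that every occurrence of every pair in $\gtext{h}$ is counted exactly once, with no double counting and no omission. Each occurrence of a pair in $\gtext{h}$ corresponds to two consecutive leaves in the derivation tree of $\rg{h}$, and their lowest common ancestor is a node labelled by some variable $X$; within $\hrules{h}(X)$ these two leaves descend from two consecutive symbols, which are either two explicit letters, a letter and a variable, or two variables — and in each case the pair is emitted exactly once when we process $\hrules{h}(X)$, because $\lml{Y}$ (resp. $\rml{Y}$) is precisely the letter realized at that leaf. Conversely, every triple we emit corresponds to at least one genuine pair occurrence: an (explicit letter, variable) boundary in $\hrules{h}(X)$ witnesses a pair occurrence in $\gtext{h}$ for \emph{each} occurrence of $X$ in the derivation tree, but since $X$ is a variable of an SLP it occurs — well, here is the subtlety: a variable may occur multiple times, so strictly we are counting occurrences of pairs in $\gtext{h}$ only if each $X \in \vars$ occurs exactly once in the derivation, i.e. $\rg{h}$ is an SLP-like grammar with distinct variables appearing once. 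Re-reading the formation rules, the $\rg{h}$ here is derived from $\slp$ by rewriting righthand sides, so a variable can be reused, meaning one emitted boundary triple corresponds to as many pair occurrences in $\gtext{h}$ as there are occurrences of $X$. Handling this correctly requires weighting each emitted triple by the number of occurrences of the enclosing variable $X$ in the start symbol's derivation — but these occurrence counts $\vocc{X}$ are themselves computable for all $X$ in $O(n)$ time by a bottom-up pass, so the fix is to emit, instead of one copy, $\vocc{X}$ copies conceptually (i.e. attach the weight $\vocc{X}$ and sum weights during the coalescing step), which does not change the asymptotics since we never materialize the copies. With this weighting the bijection between counted contributions and pair occurrences in $\gtext{h}$ is exact, and the lemma follows.

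Once correctness of the count is established, the complexity bound is immediate from the two radix sorts (remapping $\letters{h}$ to $[1..|\rg{h}|]$, and sorting the $O(|\rg{h}|)$ weighted triples) plus the $O(n)$ bottom-up passes for $\lml{\cdot}$, $\rml{\cdot}$ and $\vocc{\cdot}$, giving $O(|\rg{h}| + n)$ time and space; and the output size is $O(|\rg{h}|)$ because distinct pairs are at most the number of emitted triples. So the overall plan is: (1) bottom-up passes for first/last letters and occurrence counts; (2) a linear scan of all righthand sides emitting $O(1)$ weighted triples per symbol; (3) two radix sorts and a coalescing scan.
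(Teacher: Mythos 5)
Your proposal is correct and essentially the paper's own proof: the same association of each pair occurrence with the lowest variable whose righthand side covers it, the same use of $\lml{\cdot}$ and $\rml{\cdot}$ to capture crossing pairs at letter--variable and variable--variable boundaries, the same weighting by $\vocc{X}$ (which the paper builds in from the start rather than discovering mid-argument), and the same radix-sort-and-coalesce finish. The only detail you gloss over is why the radix sort (including your remapping of letters to ranks) runs in $O(|\rg{h}|+n)$ time; the paper justifies this by noting that letter values are bounded by $O(z \lg (N/z)) = O(n^2)$ since $z \leq n$ and $\lg N \leq n$.
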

\begin{proof}
  For any variable $X \in \vars$, let $\vocc{X}$ denote the number of occurrences of the nodes labeled with $X$ in the derivation tree of $\slp$.
  It is well known that $\vocc{X}$ for all variables can be computed in $O(n)$ time and space 
  on the DAG representation of the tree.\footnote{It is enough to compute $\vocc{X}$ once at the very beginning of $\simrecomp$.}
  Also, for any variable $X \in \vars$, let $\lml{X}$ and $\rml{X}$ denote the leftmost letter and respectively rightmost letter of $\val{\rg{h}}(X)$.
  We can compute $\lml{X}$ for all variables in $O(|\rg{h}|)$ time by a bottom up computation, i.e., 
  $\lml{X} = \lml{Y}$ if $\hrules{h}(X)$ starts with a variable $Y$, and $\lml{X} = w[1]$ if $\hrules{h}(X)$ starts with a non-empty string $w$.
  In a completely symmetric way $\rml{X}$ can be computed in $O(|\rg{h}|)$ time.

  Now observe that any occurrence $i$ of a pair $\lchar{c} \rchar{c}$ in $\gtext{h}$ can be uniquely associated with 
  a variable $X$ that is the label of the lowest node covering the interval $[i..i+1]$ in the derivation tree of $\rg{h}$ (recall that $\rg{h}$ generates $\gtext{h}$).
  We intend to count all the occurrences of pairs associated with $X$ in $\hrules{h}(X)$.
  For example, let $\hrules{h}(X) = \lchar{X} w_2 \rchar{X}$ with $w_2 \in \letters{h}^*$.
  Then $\lchar{c}\rchar{c}$ appears \emph{explicitly} in $w_2$ or \emph{crosses} the boundaries of $\lchar{X}$ and/or $\rchar{X}$.
  If $\lchar{c}\rchar{c}$ crosses the boundary of $\lchar{X}$, $\rml{\lchar{X}}$ is $\lchar{c}$ and $\rchar{c}$ follows,
  i.e., $(w_2[1] = \rchar{c}) \vee (w_2 = \emptystr \wedge \lml{\rchar{X}} = \rchar{c})$.
  Using $\rml{\lchar{X}}$ and $\lml{\rchar{X}}$, we can compute in $O(|\hrules{h}(X)|)$ time and space
  a $(|\hrules{h}(X)| - 1)$-size multiset that lists all the explicit and crossing pairs in $\hrules{h}(X)$.
  Each pair $\lchar{c} \rchar{c}$ with $\lchar{c} > \rchar{c}$ (resp. $\lchar{c} < \rchar{c}$) 
  is listed by a quadruple $(\lchar{c}, \rchar{c}, 0, \vocc{X} \})$ (resp. $(\rchar{c}, \lchar{c}, 1, \vocc{X} \})$.
  $\vocc{X}$ means that the pair has a weight $\vocc{X}$ because the pair appears every time a node labeled with $X$ appears in the derivation tree.

  We compute such a multiset for every variable, which takes $O(|\rg{h}|)$ time and space in total.
  Next we sort the obtained list in increasing order of the first three integers in a quadruple.
  Note that the maximum value of letters is $O(z\lg (N/z))$ due to Lemma~\ref{lem:recomp_size},
  and $O(z \lg (N/z)) = O(n^2)$ since $z \leq n$ and $\lg N \leq n$ hold.
  Thus the sorting can be done in $O(n)$ time and space by radix sort.
  Finally we can get the adjacency list of $\gtext{h}$ by summing up weights of the same pair.
  The size of the list is clearly $O(|\rg{h}|)$.
\end{proof}

The next lemma shows how to implement $\pcomp$ on CFGs:
\begin{lemma}\label{lem:sim_pcomp}
  For any odd $h~(0 \leq h < \hat{h})$, we can compute $\rg{h+1}$ from $\rg{h}$ in $O(|\rg{h}|+n)$ time and space.
  In addition, $|\rg{h+1}| \leq |\rg{h}| + 2n$.
\end{lemma}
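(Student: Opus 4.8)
The plan is to mimic on $\rg{h}$ the three moves that $\pcomp$ makes on the string: fix the partition, \emph{uncross} every compressible pair by popping a single letter out of each variable (the operation $\pol$), and finally replace all the now-explicit compressible pairs. First I would apply Lemma~\ref{lem:adjacency_list_cfg} to build the adjacency list of $\gtext{h}$ in $O(|\rg{h}|+n)$ time, and run Algorithm~\ref{fig:compute_partition} (Lemma~\ref{lem:compute_partition}) on it to obtain a partition $(\lletters{h},\rletters{h})$ of $\letters{h}$ in $O(|\rg{h}|)$ further time. I would also precompute $\lml{X}$ and $\rml{X}$ for all $X\in\vars$ by one bottom-up pass over $\rg{h}$, exactly as in the proof of Lemma~\ref{lem:adjacency_list_cfg}.

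The heart of the argument is $\pol$. Mark a variable $X$ as \emph{left-popping} iff $\lml{X}\in\rletters{h}$ and as \emph{right-popping} iff $\rml{X}\in\lletters{h}$; these flags depend only on the original $\rg{h}$. Then process the variables in reverse topological order: for a left-popping $X$, delete the leftmost letter of $\hrules{h}(X)$'s expansion (a single letter, once the already-processed children have deposited their own popped letters into $\hrules{h}(X)$) and insert a copy of it immediately before every occurrence of $X$ in the righthand sides of its parents; symmetrically for right-popping. If a child of $X$ is emptied this way, delete it and rewrite $\hrules{h}(X)$ into the appropriate shape allowed by the special formation. Since parents are processed after their children, this is one bottom-up sweep costing $O(|\rg{h}|)$ to scan the righthand sides plus $O(n)$ for all variable occurrences, i.e.\ $O(|\rg{h}|+n)$. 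For the size bound, each variable contributes at most one left-pop and at most one right-pop; a pop removes one letter from the variable's own rule and adds one letter per occurrence of that variable elsewhere, so its net contribution is (number of occurrences of the variable) $-\,1$. As the special formation puts at most $2n$ variable occurrences into all righthand sides together, the total increase is at most $2\cdot 2n-2n=2n$, which already gives $|\rg{h+1}|\le|\rg{h}|+2n$ before the final compression (which only deletes letters).

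Next I would argue that after $\pol$ no compressible pair straddles a variable boundary. Consider the left boundary of an occurrence of a variable $X$: if $X$ was not left-popping then that occurrence's expansion begins with a letter of $\lletters{h}$, which cannot be the right component of a compressible pair from $\lletters{h}\rletters{h}$; if $X$ was left-popping then the letter now sitting immediately to the left of the occurrence is the popped letter, which lies in $\rletters{h}$ and hence cannot be the left component of a compressible pair. The symmetric statement holds for right boundaries, and the hypothesis that $\gtext{h}$ has no block of length $\ge 2$ ensures that $\pcomp$ is well defined on each block of literal letters. Consequently every compressible pair of $\gtext{h}$ occurs explicitly inside a single $\hrules{h}(X)$, and replacing all of them by fresh letters — greedily from left to right within each literal block, exactly as in Lemma~\ref{lem:compute_pcomp} — yields a CFG that generates $\gtext{h+1}$ and still obeys the special formation, since no letter is ever created to the left of the first, or to the right of the last, variable of a non-starting rule. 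To carry this out in $O(|\rg{h}|+n)$ time I would list all explicit occurrences of compressible pairs as triples and radix-sort them; the letter values remain bounded by $O(z\lg(N/z))=O(n^{2})$ by Lemma~\ref{lem:recomp_size}, so two counting-sort rounds over a universe of size $O(n)$ suffice, after which each pair is replaced by the fresh letter of its rank. Summing the three phases gives the claimed $O(|\rg{h}|+n)$ time and space.

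The step I expect to be genuinely delicate is not the running-time accounting but the correctness bookkeeping around $\pol$: verifying that a single letter per side really suffices to uncross (the argument above hinges on the popped letter landing on the ``wrong'' side of the partition, so that it cannot form a new compressible crossing pair), and checking that the degenerate cases — a child collapsing to one letter and vanishing, a rule left with a single variable, and the starting variable absorbing the letters that percolate up from the leaves — all keep $\rg{h+1}$ inside the special formation. I would organize these by a case analysis on the shape of $\hrules{h}(X)$.
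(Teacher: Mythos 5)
Your proposal is correct and follows essentially the same route as the paper's proof: compute the partition from the adjacency list (Lemmas~\ref{lem:adjacency_list_cfg} and~\ref{lem:compute_partition}), uncross crossing pairs by popping boundary letters out of variables, bound the growth by $2n$, and replace the now-explicit pairs via radix sort using the $O(n^2)$ bound on letter values. The only difference is cosmetic: the paper's $\pil$/$\pol$ deposits each popped letter directly at the non-extremal variable occurrences using the precomputed $\lml{}$/$\rml{}$ values, whereas you cascade the pops bottom-up through parents, which yields the same final grammar and the same $O(|\rg{h}|+n)$ cost.
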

\begin{proof}
  We first compute the partition $(\lletters{h}, \rletters{h})$ of $\letters{h}$,
  which can be done in $O(|\rg{h}|+n)$ time and space by Lemmas~\ref{lem:adjacency_list_cfg} and~\ref{lem:compute_partition}.

  Given $(\lletters{h}, \rletters{h})$, 
  we can detect all the positions of the pairs from $\lletters{h} \rletters{h}$
  in the righthands of $\hrules{h}$, which should be compressed.
  Some of the appearances of the pairs are explicit and the others are crossing.
  While explicit pairs can be compressed easily, crossing pairs need an additional treatment.
  In order to deal with crossing pairs, 
  we first \emph{uncross} them 
  by popping out $\lml{Y}$ (resp. $\rml{Y}$) from $\val{\rg{h}}(Y)$ iff $\lml{Y} \in \rletters{h}$ (resp. $\rml{Y} \in \lletters{h}$) 
  for every variable $Y$ other than the starting variable.
  More precisely, we do the following:
  \begin{description}
  \item[$\pil$] For any variable $X$,
        if $\hrules{h}(X)[i] = Y \in \vars$ with $i > 1$ ($i \geq 1$ if $X$ is the starting variable) and $\lml{Y} \in \rletters{h}$, 
        replace the occurrence of $Y$ with $\lml{Y} Y$;
        if $\hrules{h}(X)[i] = Y \in \vars$ with $i < |\hrules{h}(X)|$ ($i \leq |\hrules{h}(X)|$ if $X$ is the starting variable) and $\rml{Y} \in \lletters{h}$, 
        replace the occurrence of $Y$ with $Y \rml{Y}$.
  \item[$\pol$] For any variable $X$ other than the starting variable,
        if $\hrules{h}(X)[1] \in \rletters{h}$, remove the first letter of $\hrules{h}(X)$; and
        if $\hrules{h}(X)[|\hrules{h}(X)|] \in \lletters{h}$, remove the last letter of $\hrules{h}(X)$.
        In addition, if $X$ becomes empty, we remove all the appearances of $X$ in $\hrules{h}$.
  \end{description}
  $\pol$ removes $\lml{Y}$ and $\rml{Y}$ from $\val{\rg{h}}(Y)$ if they can be a part of a crossing pair 
  and $\pil$ introduces the removed letters into appropriate positions in $\hrules{h}$ so that the modified $\rg{h}$ keeps to generate $\gtext{h}$.
  Notice that for each variable $X$ the positions where letters popped-in is at most two (four if $X$ is the starting variable)
  and there is at least one variable that has no variables below, and hence, the size of $\rg{h}$ increases at most $2n$.
  The uncrossing can be conducted in $O(|\rg{h}|+n)$ time.

  Since all the pairs to be compressed become explicit now, we can easily conduct $\bcomp$ in $O(|\rg{h}|+n)$ time.
  We scan righthand sides in $O(|\rg{h}|)$ time and list all the occurrences of pairs to be compressed.
  Each occurrence of pair $\lchar{c} \rchar{c} \in \lletters{} \rletters{}$ is listed 
  by a triple $(\lchar{c}, \rchar{c}, p)$, where $p$ is the pointer to the occurrence.
  Then we sort the list according to the pair of integers $(\lchar{c}, \rchar{c})$, 
  which can be done in $O(|\rg{h}|+n)$ time and space by radix sort because $\lchar{c}$ and $\rchar{c}$ are $O(n^2)$.
  Finally, we replace each pair at position $p$ with a fresh letter based on the rank of $(\lchar{c}, \rchar{c})$.
\end{proof}

\subsubsection{$\bcomp$ on CFGs}\label{sec:bcomp_on_CFG}
For any even $h~(0 \leq h < \hat{h})$, $\bcomp$ can be implemented in a similar way to $\pcomp$ of Lemma~\ref{lem:sim_pcomp}.
A block $\gtext{h}[b..e]$ of length $\geq 2$ is uniquely associated with a variable $X$ 
that is the label of the lowest node covering the interval $[b-1..e+1]$ in the derivation tree of $\rg{h}$ 
(if $b = 0$ or $e = |\gtext{h}|$, the block is associated with the starting variable).
Note that we take $[b-1..e+1]$ rather than $[b..e]$ to be sure that the block cannot extend outside the variable.
Some blocks are explicitly written in $\hrules{h}(X)$ and some others are crossing the boundaries of variables in $\hrules{h}(X)$.
The numbers of explicit blocks and crossing blocks in $\hrules{h}$ is at most $|\rg{h}|$ and $2n$, respectively.
The crossing blocks can be uncrossed in a similar way to uncrossing pairs.
Then $\bcomp$ can be done by replacing all the blocks with fresh letters on righthand sides of $\hrules{h}$.

However here we have a problem.
Recall that in order to give a unique letter to a block $c^d$, we have to sort the pairs of integers $(c, d)$ (see Lemma~\ref{lem:compute_bcomp}).
Since $d$ might be exponentially larger than $|\rg{h}|+n$, radix sort cannot be executed in $O(|\rg{h}|+n)$ time and space.
In Section~6.2 of~\cite{Jez2015FFC}, Je\.z showed how to solve this problem by tweaking the representation of lengths of long blocks, 
but its implementation and analysis are involved.\footnote{Note that Section~6.2 of~\cite{Jez2015FFC} also takes care of the case where the word size is $\Theta(\lg n)$ rather than $\Theta(\lg N)$. We do not consider the $\Theta(\lg n)$-bits model in this paper because using $\Theta(\lg N)$ bits to store the length of string generated by every letter is crucial for extract and LCE queries.\@ However, we believe that our new observation stated in Lemma~\ref{lem:short_block} will simplify the analysis for the $\Theta(\lg n)$-bits model, too.}

We show in Lemma~\ref{lem:short_block} our new observation, which leads to a simpler implementation and analysis of $\bcomp$.
We say that a block $c^d$ is \emph{short} if $d = O(|\rg{h}|+n)$ and \emph{long} otherwise.
Also, we say that a variable is \emph{unary} iff its righthand side consists of a single block.

\begin{lemma}\label{lem:short_block}
  For any even $h~(0 \leq h < \hat{h})$, a block $\gtext{h}[b..e] = c^d$ is short if it does not include a substring generated from a unary variable.
\end{lemma}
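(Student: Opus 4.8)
The plan is to argue by contradiction on the structure of the derivation tree of $\rg{h}$, tracing how a block of letter $c$ gets its length. Suppose $\gtext{h}[b..e] = c^d$ is a long block, i.e., $d = \omega(|\rg{h}|+n)$. I want to show that somewhere inside $[b..e]$ there is a maximal run of $c$'s that is fully generated by a single variable whose righthand side is a single block $c^{d'}$ (a unary variable). The key quantitative point is that there are only $n$ variables and the total length of all righthand sides is $|\rg{h}|$, so explicitly written letters in the grammar can only contribute $O(|\rg{h}|)$ to any block; the rest of the $d$ copies of $c$ must come from repeated expansion of variables, and a sufficiently long homogeneous run forces one of those variables to expand to a pure power of $c$.

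First I would set up the following bookkeeping. Consider the (at most $|\rg{h}|$) positions in the righthand sides of $\hrules{h}$ where the letter $c$ appears explicitly; call these the \emph{seeds}. Every occurrence of $c$ in $\gtext{h}$ descends from exactly one seed, via a chain of variables. For the block $\gtext{h}[b..e] = c^d$, partition its $d$ positions according to which variable is the label of the lowest node whose generated interval is contained in $[b..e]$ and that contains that position together with a ``neighbor'' inside the block — more carefully, I would look at the set $S$ of variables $Y$ such that some occurrence of $Y$ in the derivation tree generates a nonempty substring lying entirely within $[b..e]$, and among these pick the maximal ones (closest to the root). Each maximal such $Y$ generates a substring of $c^d$; since it is a substring of a single-letter run, $\val{\rg{h}}(Y) = c^{d_Y}$ for some $d_Y \ge 1$. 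If any such $Y$ has $d_Y \ge 2$, then by definition $\hrules{h}(Y)$ must itself generate $c^{d_Y}$; I then want to push this down: either $\hrules{h}(Y)$ is literally a block $c^{d_Y}$ (so $Y$ is unary and we are done), or $\hrules{h}(Y)$ contains a variable $Z$ generating $c^{d_Z}$ with $1 \le d_Z < d_Y$, and I recurse on $Z$. This recursion terminates at a variable generating a \emph{single} letter $c$ (length-one) or at a unary variable; so to finish I must rule out the possibility that the whole block is assembled only from length-one pieces and explicit seed letters.

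That last step is where the counting comes in, and it is the part I expect to be the crux. The block $\gtext{h}[b..e]$ is covered by a chain of maximal nodes as above; between consecutive maximal nodes there is a boundary, and each boundary corresponds to a place in some $\hrules{h}(X)$ where two righthand-side symbols meet. Walking up from $b$ to the root and down to $e$, the number of distinct grammar positions involved is $O(\text{height})$ per side, but more usefully: if no maximal node $Y$ has $d_Y \ge 2$, then every maximal node contributes exactly one $c$, so $d$ equals the number of maximal nodes in the chain covering $[b..e]$. Each such node is an occurrence of a variable or an explicit letter sitting in a righthand side; consecutive ones share a parent, and within a fixed parent $X$ the explicit run of $c$'s in $\hrules{h}(X)$ has length $< 2$ would be needed to keep things short — but in fact a run of length-one contributions of $c$ all lying under a single parent $X$ is exactly an explicit block inside $\hrules{h}(X)$. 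The total length of all explicit content is $|\rg{h}|$, and the number of parent-changes along the chain is $O(n)$ (bounded by counting distinct variables, since the chain, read bottom-up, visits a nested sequence). Hence $d = O(|\rg{h}| + n)$, contradicting longness. Therefore some $d_Y \ge 2$, the recursion above produces a unary variable, and that unary variable generates a substring of $c^d$, i.e., a substring contained in $\gtext{h}[b..e]$. Contrapositively, if $\gtext{h}[b..e]$ contains no substring generated from a unary variable, it is short, as claimed. The main obstacle is making the ``number of parent-changes along the covering chain is $O(n)$'' argument fully rigorous — i.e., correctly formalizing the covering-chain decomposition of a one-letter run in a CFG of the special form $\rg{h}$ and charging its length to $|\rg{h}|$ plus the $n$ variables.
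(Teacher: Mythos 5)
Your plan is correct and, once the covering-chain step is made precise, it is essentially the paper's own argument: the parents of your maximal pieces are all ancestors of position $b$ or of position $e$, and since the grammar is acyclic each variable occurs at most once on each of these two root-to-leaf paths, so the block length is charged to at most $2|\rg{h}|$ righthand-side symbols --- exactly the paper's bound $d \le \sum_{i} |\hrules{h}(X_i)| \le 2|\rg{h}|$, which also settles the ``$O(n)$ parent-changes'' point you flagged as the crux. The recursion you add to locate a unary variable when some piece has length at least two is just the detail the paper compresses into ``it is easy to see,'' not a different route.
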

\begin{proof}
  Consider the derivation tree of $\rg{h}$ and the shortest path from $\gtext{h}[b]$ to $\gtext{h}[e]$.
  Let $X_1 X_2 \cdots X_{m'} \cdots X_m$ be the sequence of labels of internal nodes on the path, 
  where $X_{m'}$ corresponds to the lowest common ancestor of $\gtext{h}[b]$ and $\gtext{h}[e]$.
  Since SLPs have no loops in the derivation tree, $X_1, \dots, X_{m'}$ are all distinct.
  Similarly $X_{m'+1}, \dots, X_{m}$ are all distinct.
  Since a unary variable is not involved to generate the block, it is easy to see that $d \leq \sum_{i = 1}^{m} |\hrules{h}(X_i)| \leq 2|\rg{h}|$ holds.
\end{proof}

Lemma~\ref{lem:short_block} implies that most of blocks we find during the compression are short, which can be sorted efficiently by radix sort.
If there is a long block in $\hrules{h}$, an occurrence of a unary variable $X$ must be involved to generate the block.
Since $\bcomp$ at level $h$ pops out all the letters from $X$ and removes the occurrences of $X$ in $\hrules{h}$,
there are at most $2n$ long blocks in total.
The number of long blocks can also be upper bounded by $2N/n$ with a different analysis based on the following fact:
\begin{fact}\label{fact:overlap_long_block}
If a substring of original text $\gtext{}$ generated from a long block overlaps with that generated from another long block,
one substring must include the other, and moreover, the shorter block is completely included in ``one'' letter of the longer block.
Hence the length of the substring of the longer block is at least $n$ times longer than that of the shorter block.
\end{fact}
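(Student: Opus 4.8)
The plan is to work entirely with the \emph{level-$h$ cell decomposition} of $\gtext{}$: for every level $h$, cut $\gtext{}$ into the consecutive intervals $\val{}(\gtext{h}[1]),\val{}(\gtext{h}[2]),\dots$, one per symbol of $\gtext{h}$; call these the level-$h$ cells. Two properties of this family carry the whole argument. First, it only \emph{coarsens} as $h$ increases: the string operation taking $\gtext{h}$ to $\gtext{h+1}$ ($\bcomp$ or $\pcomp$) only fuses adjacent symbols and never subdivides one, while the auxiliary CFG rewrites $\pil,\pol$ (and the analogous uncrossing preceding $\bcomp$) merely relocate letters inside $\rg{h}$ while keeping $\rg{h}$ generating $\gtext{h}$, hence leave every level-$h$ cell untouched. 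Second, when $\bcomp$ at level $h_1$ compresses a block $c_1^{d_1}$, it replaces that run (which uncrossing has already made explicit and maximal in $\gtext{h_1}$) by one fresh letter, so at level $h_1+1$ the substring $S_1$ generated by $c_1^{d_1}$ is \emph{exactly one} level-$(h_1+1)$ cell.

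Let the two long blocks be $c_1^{d_1}$ and $c_2^{d_2}$, compressed by $\bcomp$ at even levels $h_1$ and $h_2$, and let $S_1,S_2$ be the substrings of $\gtext{}$ they generate, with $S_1\cap S_2\neq\emptyset$. I would first rule out $h_1=h_2$: at a common level, $S_1$ and $S_2$ are each a \emph{maximal} block of consecutive equally-labelled level-$h_1$ cells, so if they share a cell that cell fixes the common label and the two maximal runs coincide, contradicting distinctness of the blocks. Hence $h_1\neq h_2$; rename so that $h_1<h_2$.

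Now combine the two properties. Since $S_1$ is one cell at level $h_1+1\le h_2$ and the decomposition only coarsens, $S_1$ lies within a single level-$h_2$ cell, say $\val{}(\gtext{h_2}[k])$. Because $S_1\cap S_2\neq\emptyset$ this cell meets $S_2$; but $S_2$ is a union of whole level-$h_2$ cells, so $\val{}(\gtext{h_2}[k])$ is one of them, i.e.\ $\gtext{h_2}[k]=c_2$ and the cell is a single occurrence of the letter $c_2$ inside the run $c_2^{d_2}$. Hence $S_1\subseteq\val{}(\gtext{h_2}[k])\subsetneq S_2$: one generated substring contains the other, and the shorter one, $S_1$, sits inside a single letter of the longer block (note this nesting uses only that the blocks are compressed at distinct levels, not their lengths). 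Finally $|S_1|\le|\val{}(\gtext{h_2}[k])|=|\val{}(c_2)|$ while $|S_2|=d_2\,|\val{}(c_2)|$, and $d_2\ge n$ by the threshold used to declare a block long, so $|S_2|\ge n\,|S_1|$, which is the final assertion.

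The one delicate point is the first property --- that no step of $\recomp$ subdivides a level-$h$ cell of $\gtext{}$. For the string operations $\bcomp$/$\pcomp$ this is immediate; what must be spelled out is that $\pil,\pol$ and the uncrossing step are pure CFG transformations preserving the string generated by $\rg{h}$ (indeed, their only purpose is to make every block/pair that $\bcomp$/$\pcomp$ will compress \emph{explicit and maximal} in some righthand side), so they cannot change the cell decomposition of $\gtext{}$. With that established, everything else is bookkeeping on a refining chain of interval partitions, and I expect no further obstacle.
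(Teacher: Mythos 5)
The paper states this as a \emph{Fact} with no accompanying proof, so there is no author argument to compare against; your proposal is correct and supplies exactly the natural justification. The key points are all in place: the level-wise cell decompositions form a laminar (refining) family because $\bcomp$/$\pcomp$ only fuse adjacent symbols of $\gtext{h}$, a compressed block becomes exactly one cell at the next level (hence sits inside a single cell at every later level, in particular inside one letter-occurrence of any higher-level block it meets), two distinct maximal runs at the same level cannot overlap, and $d_2 > n$ for a long block yields the factor $n$; your remark that $\pol$/$\pil$ and uncrossing only rewrite $\rg{h}$ while preserving the string $\gtext{h}$ correctly disposes of the only point where the grammar-level bookkeeping could have interfered.
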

Let us consider the long blocks that generate substrings whose lengths are $[n^{i}..n^{i+1})$ for a fixed integer $i \geq 1$.
By Fact~\ref{fact:overlap_long_block}, the substrings cannot overlap, and hence, the number of such long blocks is at most $N/n^{i}$.
Therefore, the total number of long blocks is at most $\sum_{i \geq 1} N/n^{i} \leq 2N/n$.
Thus we get the following lemma.
\begin{lemma}\label{lem:ub_long_blocks}
  There are at most $O(\min(n, N / n))$ long blocks found during $\simrecomp$.
\end{lemma}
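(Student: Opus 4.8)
The plan is to prove two separate upper bounds on the total number of long blocks encountered over all levels of $\simrecomp$ --- one of order $n$ and one of order $N/n$ --- and then observe that the smaller of the two always applies, giving $O(\min(n, N/n))$.

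First I would establish the $O(n)$ bound. By Lemma~\ref{lem:short_block}, whenever a block $\gtext{h}[b..e] = c^d$ is long it must contain a substring generated by a unary variable, i.e.\ a variable whose righthand side is a single block of the letter $c$. I would charge each long block found at level $h$ to such a witnessing unary variable $X$. The key structural fact is that $\bcomp$ at level $h$ pops out all the letters from $X$, making $X$ empty, and then removes every occurrence of $X$ from the righthand sides of $\hrules{h}$; so $X$ can never again be unary (nor even appear) at a later level. Since each variable is charged at most a constant number of times (at the single level where it is the witnessing unary variable), and $|\vars| = n$, the total number of long blocks over all levels is $O(n)$.

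Next I would establish the $O(N/n)$ bound using Fact~\ref{fact:overlap_long_block}. I would partition the long blocks according to the length of the substring of the original text $\gtext{}$ that each generates: for a fixed integer $i \geq 1$, group together the long blocks whose generated substring has length in $[n^i..n^{i+1})$. By Fact~\ref{fact:overlap_long_block}, if two such substrings overlapped then the shorter would be entirely contained in a single letter of the longer block, which would force the longer substring to be at least $n$ times the shorter --- impossible when both lengths lie in the same interval of multiplicative width $n$. Hence the substrings in one group are pairwise disjoint, so there are at most $N/n^i$ long blocks in group $i$, and summing over $i \geq 1$ gives $\sum_{i \geq 1} N/n^i \leq 2N/n$. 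Combining with the previous paragraph yields the claimed $O(\min(n, N/n))$.

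The routine parts are the two counting arguments themselves; the only point requiring care is the accounting in the $O(n)$ bound, namely making precise that no unary variable is ``reused'' as a witness across levels (which is exactly what the removal step of $\bcomp$ guarantees) and that, within a single level, the number of long blocks a variable can witness is $O(1)$ --- a consequence of the fact that after uncrossing, each long block appears explicitly inside some righthand side, and a unary $X$ contributes to at most a constant number of explicit or crossing blocks before being emptied. I expect this bookkeeping, rather than any deeper difficulty, to be the main obstacle, and it is handled entirely by structural properties of $\simrecomp$ already in place.
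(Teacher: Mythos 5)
Your overall strategy coincides with the paper's: an $O(n)$ bound obtained by charging long blocks to unary variables that $\bcomp$ empties and removes, an $O(N/n)$ bound obtained by grouping long blocks according to the length scale $[n^i..n^{i+1})$ of the generated substring and invoking Fact~\ref{fact:overlap_long_block}, and then taking the minimum. The $O(N/n)$ half is exactly the paper's argument and is correct as you state it.

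The gap is in the accounting of the $O(n)$ half, at precisely the point you flagged: the claim that within a single level a unary witness $X$ accounts for only $O(1)$ long blocks is not justified, and it fails in general. A variable can occur in the righthand sides of $\Theta(n)$ distinct variables, and the uncrossing step inserts a popped block at \emph{every} occurrence of the variable out of which it is popped (possibly propagating further up when it sits at a boundary). So a single long run, whose existence is witnessed by one unary variable $X$ in the sense of Lemma~\ref{lem:short_block}, can surface after uncrossing as an explicit long block in many distinct righthand sides of $\rg{h}$ at the same level --- all sharing the same witness $X$. Consequently ``$n$ variables times $O(1)$ charges each'' does not bound the number of long blocks, and your argument as written breaks at this step. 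The paper's accounting avoids this by charging occurrences rather than variables: each long block is charged to an occurrence of a (unary) variable in the righthand sides of $\hrules{h}$ that is removed at that level. Since the righthand sides contain at most $2n$ variable occurrences in total, and occurrences are only ever removed, never created, each occurrence is charged at most once over the entire run of $\simrecomp$, which yields the $O(n)$ bound. Your proposal already contains the ingredient needed for this repair --- you correctly note that \emph{all occurrences} of $X$ are deleted when it is emptied --- so the fix is to make the charge per removed occurrence rather than per variable.
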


By Lemma~\ref{lem:ub_long_blocks}, we can employ a standard comparison-base sorting algorithm to sort all long blocks in $O(n \lg (\min(n, N/n)))$ time in total.
In particular, $\bcomp$ of one level can be implemented in the following complexities:
\begin{lemma}\label{lem:sim_bcomp}
  For any even $h~(0 \leq h < \hat{h})$, we can compute $\rg{h+1}$ from $\rg{h}$ in $O(|\rg{h}| + n + m \lg m))$ time and $O(|\rg{h}|+n)$ space,
  where $m$ is the number of long blocks in $\hrules{h}$.
  In addition, $|\rg{h+1}| \leq |\rg{h}| + 2n$.
\end{lemma}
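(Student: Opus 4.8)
The plan is to mirror the structure of the proof of Lemma~\ref{lem:sim_pcomp}, but with two extra ingredients: the uncrossing of blocks (rather than pairs) and the separate handling of long versus short blocks when assigning fresh letters. First I would describe the uncrossing step. For every non-starting variable $Y$, if $\lml{Y} \in \bletters{h}$ I pop out the entire maximal run of $\lml{Y}$ at the left end of $\val{\rg{h}}(Y)$, and symmetrically on the right; the popped-out run of length $d$ is then reinserted (``popped in'') at every occurrence of $Y$ in the righthand sides of $\hrules{h}$ as a single explicit block symbol $(\lml{Y})^d$ placed immediately before (resp.\ after) that occurrence. This is the block analogue of $\pil/\pol$; it can be performed in $O(|\rg{h}|+n)$ time by a bottom-up pass computing $\lml{\cdot}$, $\rml{\cdot}$ and the corresponding run lengths, exactly as in Lemma~\ref{lem:adjacency_list_cfg}. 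As in Lemma~\ref{lem:sim_pcomp}, each variable receives at most two popped-in symbols (four if it is the starting variable), and there is at least one variable with no variable below it, so the number of righthand-side symbols grows by at most $2n$; this already gives $|\rg{h+1}| \le |\rg{h}|+2n$ since $\bcomp$ only decreases the size afterwards. After uncrossing, every block of $\gtext{h}$ of length $\ge 2$ appears explicitly inside some single $\hrules{h}(X)$.

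Next I would handle the letter assignment. Scan all righthand sides in $O(|\rg{h}|+n)$ time and list every explicit block $c^d$ with $d \ge 2$, recording a pointer to its location. By Lemma~\ref{lem:short_block}, a block is long only if it contains a substring generated from a unary variable; let $m$ be the number of long blocks. Split the list into short blocks (those with $d = O(|\rg{h}|+n)$) and long blocks. For short blocks, sort the pairs $(c,d)$ by radix sort: $c = O(n^2)$ by Lemma~\ref{lem:recomp_size} and $d = O(|\rg{h}|+n)$, so this costs $O(|\rg{h}|+n)$ time and space. For long blocks, sort them with an ordinary comparison-based algorithm in $O(m \lg m)$ time; merge the two sorted lists (the long-block keys are all larger than the short ones, so a concatenation suffices) and replace each listed block at its pointer by a fresh letter according to its rank, giving distinct letters to distinct $(c,d)$ pairs across the two parts. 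This is $O(|\rg{h}|+n+m\lg m)$ time and $O(|\rg{h}|+n)$ space overall, which is the claimed bound.

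The only point needing care — and I expect it to be the main obstacle — is making precise that the uncrossing step does not create or destroy any block inadvertently, i.e.\ that after $\pol/\pil$ the multiset of blocks of $\gtext{h}$ (with multiplicity counted via $\vocc{\cdot}$) is faithfully represented by the explicit blocks in $\hrules{h}$, and that a block spanning a boundary after uncrossing is impossible. Concretely, one must check: (i) if $\lml{Y} \notin \bletters{h}$ then the left end of $\val{\rg{h}}(Y)$ is not part of any block of length $\ge 2$ that crosses into $Y$ from its left neighbour in a righthand side (by definition of $\bletters{h}$, $\lml{Y}$ has no run of length $\ge 2$ in $\gtext{h}$, so any crossing block at that junction has length $1$, which is not compressed); and (ii) if $\lml{Y} \in \bletters{h}$, popping out its maximal left-run and re-inserting it explicitly is consistent across all occurrences of $Y$ because every occurrence of $Y$ in the derivation tree contributes the identical left context letter, so the merged run with whatever precedes it is the same everywhere and is captured explicitly. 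Once this bookkeeping is in place the rest is the routine radix/comparison-sort argument already used for $\pcomp$, so I would state (i)–(ii) as short observations and then assemble the complexities as above.
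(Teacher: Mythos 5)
Your proposal is correct and follows essentially the same route as the paper: uncross the crossing blocks via the block analogue of $\pol$/$\pil$ (with the growth bounded by $2n$ exactly as in Lemma~\ref{lem:sim_pcomp}), then assign fresh letters by radix-sorting the short blocks (keys $(c,d)$ with $c = O(n^2)$ and $d = O(|\rg{h}|+n)$) and comparison-sorting the $m$ long blocks in $O(m \lg m)$ time, which is precisely how the paper justifies the stated complexities. The small deviations (popping the run in at every occurrence of $Y$ instead of using $\pil$'s index conditions at the edges of non-starting rules, and the remark about concatenating the two sorted lists) are harmless and do not affect correctness or the bounds.
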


\subsubsection{The complexities of $\simrecomp$}
\begin{theorem}\label{theo:simrecomp}
  $\simrecomp$ runs in $O(n \lg^2 (N/n))$ time and $O(n \lg (N/n))$ space.
\end{theorem}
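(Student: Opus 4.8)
The plan is to bound the total running time and space by accounting separately for the work done in each level $h$, using the two key facts established so far: that $\pcomp$ shrinks the string by a constant factor $3/4$, and that each application of $\pcomp$ or $\bcomp$ inflates the CFG size by at most $2n$ (Lemmas~\ref{lem:sim_pcomp} and~\ref{lem:sim_bcomp}). First I would observe that after $\Theta(\lg(N/n))$ levels the string $\gtext{h}$ has length $O(n)$, since $\pcomp$ is applied every other level and compresses by a constant factor; call this threshold level $h^{*} = \Theta(\lg(N/n))$. For levels $h \le h^{*}$, the CFG size is $|\rg{h}| = O(|\rg{0}| + n h) = O(n + n\lg(N/n)) = O(n\lg(N/n))$, because $|\rg{0}| = O(n)$ and each level adds at most $2n$. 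This already gives the claimed $O(n\lg(N/n))$ space bound, provided we argue that we need not keep all levels simultaneously — only the current $\rg{h}$ — so the peak space is $\max_h |\rg{h}| + O(n) = O(n\lg(N/n))$.

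Next I would handle the levels $h > h^{*}$. Here $|\gtext{h}| = O(n)$, and since $|\rg{h}|$ is always at least $|\gtext{h}|$ is not quite what we want; instead I would note that once $|\gtext{h}| = O(n)$, we may as well decompress and run the uncompressed $\recomp$ of Section~3 on $\gtext{h^{*}}$ directly — or, more in keeping with the stated algorithm, observe that $|\rg{h}| = O(n\lg(N/n))$ still holds for all $h$ (it only ever grows by $2n$ per level, and there are $O(\lg N)$ levels total, but we must be careful: $O(\lg N)$ levels would give $O(n\lg N)$, not $O(n\lg(N/n))$). The cleaner route is: the number of levels above $h^{*}$ in which $\bcomp$/$\pcomp$ still does nontrivial work is $O(\lg n)$ (since the string has length $O(n)$ and shrinks by a constant factor under $\pcomp$), so the total extra CFG growth is $O(n\lg n)$; combined with $O(n\lg(N/n))$ from the low levels, and using $\lg n = O(\lg N)$, one must reconcile this with the claimed bound. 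I expect the intended argument simply bounds $|\rg{h}| = O(n\lg(N/n))$ throughout by noting that above level $h^{*}$ we switch to treating $\gtext{h}$ as an uncompressed string of length $O(n)$, whose recompression needs only $O(n)$ additional space and $O(n)$ additional time.

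For the time bound: summing the per-level costs, $\pcomp$ at level $h$ costs $O(|\rg{h}| + n)$ by Lemma~\ref{lem:sim_pcomp}, and $\bcomp$ at level $h$ costs $O(|\rg{h}| + n + m_h \lg m_h)$ by Lemma~\ref{lem:sim_bcomp}, where $m_h$ is the number of long blocks at that level. Over all $O(\lg(N/n))$ relevant levels, $\sum_h (|\rg{h}| + n) = O(\lg(N/n)) \cdot O(n\lg(N/n)) = O(n\lg^2(N/n))$. For the long-block term, Lemma~\ref{lem:ub_long_blocks} bounds $\sum_h m_h = O(\min(n, N/n))$ — a single global bound across all levels, not per level — so $\sum_h m_h \lg m_h = O(\min(n,N/n)\lg n) = O(n\lg(N/n))$, which is absorbed into the dominant term. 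The main obstacle, and the place where care is genuinely needed, is the dovetailing of the two regimes $h \le h^{*}$ and $h > h^{*}$: one must make sure that $|\rg{h}|$ never exceeds $O(n\lg(N/n))$ even as $h$ ranges over all $O(\lg N)$ levels, which requires the observation that once $|\gtext{h}|$ drops to $O(n)$ the algorithm can be run on the explicit string (costing only $O(n)$ further time and space), so the CFG size stops growing at the $O(n\lg(N/n))$ it reached by level $h^{*}$.
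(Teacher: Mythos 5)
Your plan is correct and follows essentially the same route as the paper: bound each level's cost by Lemmas~\ref{lem:sim_pcomp} and~\ref{lem:sim_bcomp} with $|\rg{h}| = O(n) + 2nh = O(n\lg(N/n))$, absorb the long-block sorting via Lemma~\ref{lem:ub_long_blocks}, and — the key step you correctly settle on after some hedging — switch to the uncompressed $\recomp$ of Lemma~\ref{lem:compute_recomp} once $|\gtext{h}|$ drops below $n$, so that only $O(\lg(N/n))$ CFG levels are ever processed and the peak space is the largest intermediate CFG. This is exactly the paper's argument, so no changes are needed.
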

\begin{proof}
  Using $\pcomp$ and $\bcomp$ implemented on CFGs (see Lemma~\ref{lem:sim_pcomp} and~\ref{lem:sim_bcomp}), 
  $\simrecomp$ transforms level by level $\rg{0}$ into $\rg{1}, \rg{2}, \dots, \rg{\hat{h}}$.
  In each level, the size of CFGs can increase at most $2n$ by the procedure of uncrossing.
  Since $|\rg{h}| = O(n \lg N)$ for any $h~(0 \leq h < \hat{h})$, 
  we get the time complexity $O(n \lg^2 N)$ by simply applying Lemmas~\ref{lem:sim_pcomp} and~\ref{lem:sim_bcomp}.

  We can improve it to $O(n \lg^2 (N/n))$ by a similar trick used in the proof of Lemma~\ref{lem:recomp_size}.
  At some level $h'$ where $|\gtext{h'}|$ becomes less than $n$, 
  we decompress $\rg{h'}$ and switch to $\recomp$, which transforms $\gtext{h'}$ into $\gtext{\hat{h}}$ in $O(n)$ time by Lemma~\ref{lem:compute_recomp}.
  We apply Lemmas~\ref{lem:sim_pcomp} and~\ref{lem:sim_bcomp} only for $h$ with $0 \leq h < h'$.
  Since $h' = O(\lg (N/n))$, $|\rg{h}| = O(n \lg (N/n))$ for any $h~(0 \leq h < h')$.
  Therefore, we get the time complexity $O(n \lg^2 (N/n))$.
  The space complexity is bounded by the maximum size of CFGs $\rg{0}, \rg{1}, \dots, \rg{h'}$, which is $O(n \lg (N/n))$.
\end{proof}

\subsection{$\gtog$: $O(n \lg (N/n))$-time recompression}\label{sec:gtog}
We modify $\simrecomp$ slightly to run in $O(n \lg (N/n))$ time and $O(n + z \lg (N/z))$ space.
The idea is the same as what has been presented in Section~6.1 of~\cite{Jez2015FFC}.
The problem of $\simrecomp$ is that the sizes of intermediate CFGs $\rg{h}$ can grow up to $O(n \lg (N/n))$.
If we can keep their sizes to $O(n)$, everything goes fine.
This can be achieved by using two different types of partitions of $\letters{h}$ for $\pcomp$:
One is for compressing $\gtext{h}$ by a constant factor, and the other for compressing $|\rg{h}|$ by a constant factor (unless $|\rg{h}|$ is too small to compress).
Recall that the former partition has been used in $\recomp$ and $\simrecomp$, 
and the partition is computed from the adjacency list of $\gtext{h}$ by Algorithm~\ref{fig:compute_partition}.
Algorithm~\ref{fig:compute_partition} can be extended to work on a set of strings 
by just inputting the adjacency list from a set of strings.
Then, we can compute the partition for compressing $|\rg{h}|$ by a constant factor 
by considering the adjacency list from a set of strings in the righthand sides of $\hrules{h}$.
The adjacency list can be easily computed in $O(|\rg{h}|+n)$ time and space 
by modifying the algorithm described in the proof of Lemma~\ref{lem:adjacency_list_cfg}:
We just ignore the weight $\vocc{X}$, i.e., use a unit weight $1$ for every listed pair.
Using the two types of partitions alternately, 
we can compress strings by a constant factor while keeping the sizes of the intermediate CFGs to $O(n)$.

We denote the modified algorithm by $\gtog$ and the resulting RLSLP by $\gtog(\slp)$.
Note that $\gtog(\slp)$ is not identical to $\recomp(\gtext{})$ in general 
because the partitions used in $\gtog$ change depending on the input $\slp$.
Still the height of $\gtog(\slp)$ is $O(\lg N)$ and the properties of $\PSeq$s hold.
Hence we can support LCE queries on $\gtog(\slp)$ as we did on $\recomp(\gtext{})$ by Lemma~\ref{lem:LCE_query}.

\subsection{Proof of Theorem~\ref{theo:lce_from_slp}}
\begin{proof}[Proof of Theorem~\ref{theo:lce_from_slp}]
  Let $\slp$ be an input SLP of size $n$ generating $\gtext{}$.
  We compute $\gtog(\slp)$ in $O(n \lg (N/n))$ time and $O(n + z \lg (N/z))$ space as described in Section~\ref{sec:gtog}.
  Since the height of $\recomp(\gtext{})$ is $O(\lg N)$, we can support $\extract(i, \ell)$ queries in $O(\lg N + \ell)$ time due to Lemma~\ref{lem:extract}.
  $\gtog(\slp)$ supports $\LCE$ queries in $O(\lg N)$ time in the same way as what was described in Lemma~\ref{lem:LCE_query}.
\end{proof}

\subparagraph*{Acknowledgements.}
The author was supported by JSPS KAKENHI Grant Number 16K16009.

\clearpage
\bibliography{ref}

\begin{thebibliography}{10}

\bibitem{Bender2005Lca}
Michael~A. Bender, Martin Farach-Colton, Giridhar Pemmasani, Steven Skiena, and
  Pavel Sumazin.
\newblock Lowest common ancestors in trees and directed acyclic graphs.
\newblock {\em J. Algorithms}, 57(2):75--94, 2005.

\bibitem{bille13:_finger_compr_strin}
P.~Bille, P.~H. Cording, I.~L. G{\o}rtz, B.~Sach, H.~W. Vildh{\o}j, and
  S{\o}ren Vind.
\newblock Fingerprints in compressed strings.
\newblock In {\em Proc. WADS 2013}, pages 146--157, 2013.

\bibitem{BilleCCG15}
Philip Bille, Anders~Roy Christiansen, Patrick~Hagge Cording, and Inge~Li
  G{\o}rtz.
\newblock Finger search in grammar-compressed strings.
\newblock {\em arXiv:1507.02853v3}.

\bibitem{bille15:_longes_common_exten_sublin_space}
Philip Bille, Inge~Li G{\o}rtz, Mathias B{\ae}k~Tejs Knudsen, Moshe Lewenstein,
  and Hjalte~Wedel Vildh{\o}j.
\newblock Longest common extensions in sublinear space.
\newblock In {\em Proc. CPM 2015}, pages 65--76, 2015.

\bibitem{bille14:_time}
Philip Bille, Inge~Li G{\o}rtz, Benjamin Sach, and Hjalte~Wedel Vildh{\o}j.
\newblock Time-space trade-offs for longest common extensions.
\newblock {\em J. Discrete Algorithms}, 25:42--50, 2014.

\bibitem{Gusfield97}
Dan Gusfield.
\newblock {\em Algorithms on Strings, Trees, and Sequences}.
\newblock Cambridge University Press, 1997.

\bibitem{I2015Drg}
Tomohiro I, Wataru Matsubara, Kouji Shimohira, Shunsuke Inenaga, Hideo Bannai,
  Masayuki Takeda, Kazuyuki Narisawa, and Ayumi Shinohara.
\newblock Detecting regularities on grammar-compressed strings.
\newblock {\em Inf. Comput.}, 240:74--89, 2015.

\bibitem{Jez2015Aog}
Artur Je{\.z}.
\newblock Approximation of grammar-based compression via recompression.
\newblock {\em Theor. Comput. Sci.}, 592:115--134, 2015.

\bibitem{Jez2015FFC}
Artur Je{\.z}.
\newblock Faster fully compressed pattern matching by recompression.
\newblock {\em {ACM} Transactions on Algorithms}, 11(3):20:1--20:43, 2015.

\bibitem{Jez2016OneVariableWordEquation_LinearTime}
Artur Je{\.z}.
\newblock One-variable word equations in linear time.
\newblock {\em Algorithmica}, 74(1):1--48, 2016.

\bibitem{Jez2016Recompression_WordEquations}
Artur Je{\.z}.
\newblock Recompression: {A} simple and powerful technique for word equations.
\newblock {\em J. {ACM}}, 63(1):4, 2016.

\bibitem{Jez2014Aos}
Artur Je{\.z} and Markus Lohrey.
\newblock Approximation of smallest linear tree grammar.
\newblock In {\em STACS}, pages 445--457, 2014.

\bibitem{Maruyama2013ESPIndex}
Shirou Maruyama, Masaya Nakahara, Naoya Kishiue, and Hiroshi Sakamoto.
\newblock Esp-index: {A} compressed index based on edit-sensitive parsing.
\newblock {\em J. Discrete Algorithms}, 18:100--112, 2013.

\bibitem{DBLP:journals/algorithmica/MehlhornSU97}
Kurt Mehlhorn, R.~Sundar, and Christian Uhrig.
\newblock Maintaining dynamic sequences under equality tests in polylogarithmic
  time.
\newblock {\em Algorithmica}, 17(2):183--198, 1997.

\bibitem{Nishimoto2016DynamicIndexAndLZ}
Takaaki Nishimoto, Tomohiro I, Shunsuke Inenaga, Hideo Bannai, and Masayuki
  Takeda.
\newblock Dynamic index and {LZ} factorization in compressed space.
\newblock In {\em Proceedings of the Prague Stringology Conference 2016,
  Prague, Czech Republic, August 29-31, 2015}, pages 158--170, 2016.

\bibitem{Nishimoto2016DynamicLCE_CompressedSpace}
Takaaki Nishimoto, Tomohiro I, Shunsuke Inenaga, Hideo Bannai, and Masayuki
  Takeda.
\newblock Fully dynamic data structure for {LCE} queries in compressed space.
\newblock In {\em 41st International Symposium on Mathematical Foundations of
  Computer Science, {MFCS} 2016, August 22-26, 2016 - Krak{\'{o}}w, Poland},
  pages 72:1--72:15, 2016.

\bibitem{Prezza2016InplaceLCE}
Nicola Prezza.
\newblock In-place longest common extensions.
\newblock {\em arXiv:1608.05100v6}, 2016.

\bibitem{PT08}
Simon~J. Puglisi and Andrew Turpin.
\newblock Space-time tradeoffs for longest-common-prefix array computation.
\newblock In {\em Proc. ISAAC '08}, volume 5369 of {\em Lecture Notes in
  Computer Science}, pages 124--135. Springer, 2008.

\bibitem{rytter03:_applic_lempel_ziv}
Wojciech Rytter.
\newblock Application of {L}empel-{Z}iv factorization to the approximation of
  grammar-based compression.
\newblock {\em Theoretical Computer Science}, 302(1--3):211--222, 2003.

\bibitem{Tanimura2016LCE_EfficitnConstruction}
Yuka Tanimura, Tomohiro I, Hideo Bannai, Shunsuke Inenaga, Simon~J. Puglisi,
  and Masayuki Takeda.
\newblock Deterministic sub-linear space {LCE} data structures with efficient
  construction.
\newblock In {\em 27th Annual Symposium on Combinatorial Pattern Matching,
  {CPM} 2016, June 27-29, 2016, Tel Aviv, Israel}, pages 1:1--1:10, 2016.

\end{thebibliography}

\end{document}